\newtheorem{thm}{Theorem}
\newtheorem{lem}{Lemma}
\newtheorem{Proposition}{Proposition}
\newtheorem{defn}{Definition}
\newcommand{\red}{\textcolor{red}}
\newcommand{\eg}{{\it e.g., }}
\newcommand{\ie}{{\it i.e., }}
\newcommand{\Ie}{{\it I.e., }}
\newcommand{\E}{{\mathbb E }}
\newcommand{\GG}{{\mathcal G}}
\newcommand{\VV}{{\mathcal V}}
\newcommand{\NN}{{\mathcal N}}
\newcommand{\EE}{{\mathcal E}}
\newcommand{\DD}{{\mathcal D}}
\newcommand{\CC}{{\mathcal C}}
\newcommand{\TT}{{\mathcal T}}
\newcommand{\SSS}{{\mathcal S}}
\begin{document}
\title{Optimal Control of Distributed Computing Networks with Mixed-Cast Traffic Flows}
\author{Jianan Zhang, Abhishek Sinha, Jaime Llorca, Antonia Tulino, Eytan Modiano
\thanks{Part of the material in this paper was presented at IEEE International Conference on Computer Communications (INFOCOM), 2018.}
\thanks{J. Zhang (jianan@mit.edu) and E. Modiano (modiano@mit.edu) are with the Laboratory for Information and Decision Systems, Massachusetts Institute of Technology, Cambridge, MA 02139, USA. A. Sinha (abhishek.sinha@iitm.ac.in) is with the Department of Electrical Engineering, IIT Madras. J. Llorca (jaime.llorca@nokia-bell-labs.com) and A. Tulino (a.tulino@nokia-bell-labs.com) are with Nokia Bell Labs, Holmdel, NJ 07733, USA. A. Tulino is also with the University degli Studi di Napoli Federico II, 80138 Naples, Italy.}
\thanks{This work was supported by DTRA grants HDTRA1-13-1-0021 and HDTRA1-14-1-0058, and by NSF grant number CNS-1617091.}}

\maketitle

\begin{abstract}
Distributed computing networks, tasked with both packet transmission and processing, require the joint optimization of communication and computation resources. We develop a dynamic control policy that determines both routes and processing locations for packets upon their arrival at a distributed computing network. The proposed policy, referred to as Universal Computing Network Control (UCNC), guarantees that packets i) are processed by a specified chain of service functions, ii) follow cycle-free routes between consecutive functions, and iii) are delivered to their corresponding set of destinations via proper packet duplications.
UCNC is shown to be throughput-optimal for any mix of unicast and multicast traffic, and is the first throughput-optimal policy for non-unicast traffic in distributed computing networks with both communication and computation constraints. Moreover, simulation results suggest that UCNC yields substantially lower average packet delay compared with existing control policies for unicast traffic. 
\end{abstract}

\section{Introduction}

The recent convergence of IP networks and IT clouds is fueling the emergence of large-scale {\em distributed computing networks}
that can host content and applications close to information sources and end users, providing rapid response, analysis, and delivery of augmented information in real time \cite{fxbook}.
This, in turn, enables a new breed of services, often referred to as {\em augmented information services}.
Unlike traditional information services, in which users consume information that is produced or stored at a given source and is delivered via a
communications network, augmented information services provide end users with information that results from the  {\em real-time processing} of source data flows via possibly multiple service functions that can be hosted at multiple locations in a distributed computing network. 

Particularly popular among these services is the class of {\bf automation services}, in which information sourced at sensing devices in physical infrastructures such as homes, offices, factories, and cities, is processed in real time in order to deliver instructions that optimize and control the automated operation of physical systems. Examples include industrial internet services (e.g., smart factories), automated transportation, smart buildings, smart homes, etc \cite{industrial_internet}. Also gaining increasing attention is the class of {\bf augmented experience services}, which allow users to consume multimedia streams that result from the combination of multiple live sources and contextual information of real-time relevance. Examples include telepresence, real-time computer vision, virtual classrooms/labs/offices, and augmented/virtual reality \cite{ar_2013}.
In addition to application-level services, with the advent of network functions virtualization (NFV), {\bf network services} that typically run on dedicated hardware can also be implemented in the form of software functions running on general purpose servers distributed throughout a computing network. Software defined networking (SDN) technologies can then be used to steer network flows through the appropriate chain of network functions \cite{fxbook}.

While most of today's computationally intensive services are hosted at centralized cloud data centers, the increasingly low latency requirements of next generation services are driving cloud resources closer to the end users in the form of small cloud nodes at the edge of the network, resulting in what is referred to as a distributed cloud network or distributed computing network \cite{fxbook}.
Compared to traditional centralized clouds, distributed computing networks provide increased flexibility in the allocation of computation and network resources, and a clear advantage in meeting stringent service latency, mobility, and location-awareness constraints.

To maximize the benefits of this attractive scenario and enable its sustainable growth, operators must be able to dynamically control the configuration of a diverse set of services according to changing demands, while minimizing the use of the shared physical infrastructure.
A key aspect driving both performance and efficiency is the actual placement of the service functions, as well as the routing of network flows through the appropriate function instances.
Traditional information services have addressed the efficient flow of information from data sources to destinations,
where sources may include static processing elements, mostly based on rigid hardware deployments.
In contrast, the efficient delivery of next generation services requires {\em jointly optimizing where to execute each service function and how to route network flows in order to satisfy service demands that may be of unicast or multicast nature.}

The {\bf static} service function placement and routing problems have been studied in previous literature.
Given fixed service rates, 
linear programming formulations for joint function placement and unicast routing under maximum flow or minimum cost objectives were developed in \cite{charikar2014multi, csdp_icc15, bari2015orchestrating, barcelo2016}.
Under fixed routing, 
algorithms for function placement with bi-criteria approximation guarantees were developed in \cite{Cohen15}.
Under fixed function placement, approximation algorithms for unicast traffic steering were given in \cite{Cao2017}.
Approximation algorithms for joint function placement and unicast routing were developed for a single function per flow in \cite{charikar2014multi} and for service function chains in \cite{nsdp_info17,Kuo2017}.

The study of {\bf dynamic} control policies for service function chains was initiated in \cite{dcnc_info16, dcnc_icc16}. The authors developed throughput-optimal policies to jointly determine processing locations and routes for unicast traffic flows in a distributed computing network, based on the backpressure algorithm. Another backpressure-based algorithm was developed in \cite{Destounis2016} in order to maximize the rate of queries for a computation operation on remote data from a particular destination.

However, no previous work has addressed the network computation problem under non-unicast traffic. In fact, it was only very recently that the first throughput-optimal algorithm for generalized flow (any mix of unicast and multicast traffic) problems in communication networks was developed \cite{umw_info17}.
Given that internet traffic is increasingly a diverse mix of unicast and multicast flows, in this work, we address the design of {\em throughput-optimal dynamic packet 
processing and routing policies for mixed-cast (unicast and multicast) service chains in distributed computing networks.} 
Our solution extends the recently developed universal max-weight algorithm \cite{umw_info17} to handle both communication and computation constraints in a distributed computing network.
Our proposed control policy also handles flow scaling, a prominent characteristic of traffic flows in distributed computing networks, where a flow may expand or shrink due to service function processing.\footnote{Video transcoding is an example of a service function that changes flow size.} A preliminary version of this paper was presented in \cite{conf}.

Our contributions are summarized as follows:
\begin{itemize}
\item
We characterize the capacity region of a distributed computing network hosting an arbitrary set of service function chains that can process an arbitrary mix of unicast and multicast traffic. Such first characterization involves the definition of generalized flow conservation laws that capture flow chaining and scaling, due to service function processing, and packet duplication, due to multicasting. 
\item
We develop a universal control policy for service function chains in distributed computing networks, referred to as Universal Computing Network Control (UCNC). 
UCNC determines both routes and processing locations for packets upon their arrival at a distributed computing network, and guarantees that packets i) are processed by a specified chain of service functions, ii) follow cycle-free routes between consecutive functions, and iii) are delivered to their corresponding set of destinations via proper packet duplications.
\item UCNC is shown to be throughput-optimal for any mix of unicast and multicast traffic, and is the first throughput-optimal algorithm for non-unicast traffic in distributed computing networks. Even for unicast traffic, compared with the previous throughput-optimal algorithm \cite{dcnc_icc16}, UCNC yields much shorter average packet delay. 
\end{itemize}

The rest of the paper is organized as follows. We introduce the model in Section \ref{sc:model}, and characterize the capacity region in Section \ref{sc:capacity}. In Section \ref{sc:routing}, we develop a routing policy to stabilize a virtual queuing system. 
In Section \ref{sc:physical}, we prove that the same routing policy, along with a proper packet scheduling policy, is throughput-optimal for the associated computing network. Section \ref{sc:simulation} presents numerical simulations, Section \ref{sc:extensions} presents extensions, and Section \ref{sc:conclusion} presents concluding remarks.

\section{System Model}
\label{sc:model}
In this section, we present models for distributed computing networks, service function chains, and mixed-cast traffic.
\subsection{Computing network model}

We consider a distributed computing network modeled as a directed graph $\mathcal G=(\mathcal V,\mathcal E)$ with $n=|\mathcal V|$ nodes and $m=|\mathcal E|$ links.
A node may represent a router, which can forward packets to neighboring nodes, or a distributed computing location, which, in addition, can host service functions for flow processing.
When network flows go through a service function at a computation node, they consume computation resources (\eg CPUs).
We denote 
by $\mu_u$ the processing capacity of node $u\in\mathcal V$.
A link represents a network connection between two nodes.
When network flows go through a link, they consume communication resources (\eg bandwidth).
We denote 
by $\mu_{uv}$ the transmission capacity of link $(u,v)\in\mathcal E$.


\subsection{Service model}
A service $\phi \in \Phi$ is described by a chain of $M_\phi$ functions $(\phi,i)$, $i \in \{1,\dots,M_\phi\}$. Each function $(\phi, i)$ is characterized by its computation requirement  
$r^{(\phi,i)}$, indicating that $r^{(\phi,i)}$ computation resource units are required to process a unit input flow. 
Function $(\phi, i)$ is also characterized by a flow scaling factor $\xi^{(\phi,i)}$, indicating that the average flow rate at the output of function $(\phi,i)$ is $\xi^{(\phi,i)}$ times the average input flow rate.
We assume that function $(\phi,i)$ is available at a subset of computation nodes 
$\NN_{(\phi,i)}\subseteq \VV$.
A flow that requires service $\phi$ must be processed by the functions $(\phi,i)$, $i \in \{1,\dots,M_\phi\}$ in order.

Figure \ref{fig:sfc} illustrates an example of a service function chain for video streaming. The first function in the chain is a firewall, with 
computation requirement $r^{(\phi,1)} = 0.1$ and 
flow scaling $\xi^{(\phi,1)} = 1$. 
The second function in the chain is a transcoding function, with 
computation requirement $r^{(\phi,2)} = 2$ and 
flow scaling $\xi^{(\phi,2)} = 0.8$. 
The numbers above the links indicate the flow rates at each stage of the service chain, and the numbers above the functions indicate the computation rates required to process the incoming flow.


\begin{figure}[h]
\centering
\includegraphics[width=.75\linewidth]{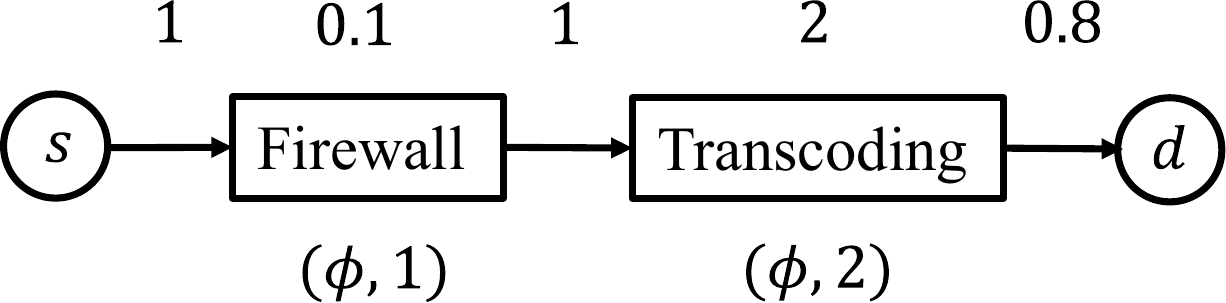}
\caption{An illustration of a service function chain with different function computation requirements and flow scaling.}
\label{fig:sfc}
\end{figure}
\subsection{Traffic model}


A commodity-$(c,\phi)$ flow is specified by a source node $s_{c}$, a set of destination nodes $\DD_{c}$, and a service $\phi$. Packets of commodity-$(c,\phi)$ flow enter the network at $s_{c}$ and exit the network for consumption at $\DD_{c}$ after being processed by the service functions in $\phi$. 
A flow is {\em unicast} if $\mathcal D_{c}$ contains a single node in $\mathcal V$, denoted by $d_c$, and is {\em multicast} if $\mathcal D_c$ contains more than one node in $\mathcal V$.
We denote by $(\CC, \Phi)$ the set of all commodities.

We consider a time slotted system with slots normalized to integral units $t\in\{0,1,2,\dots\}$.
We denote by $A^{(c,\phi)}(t)$ the number of exogenous arrivals of commodity-$(c,\phi)$ packets at node $s_{c}$ during time slot $t$, and by $\lambda^{(c,\phi)}$ its expected value, referred to as the average arrival rate, where we assume that $A^{(c,\phi)}(t)$ is independently and identically distributed (i.i.d.) across time slots. The vector $\boldsymbol{\lambda} = \{\lambda^{(c,\phi)}, (c,\phi) \in (\CC, \Phi)\}$ characterizes the arrival rates to the network.

\section{Policy space and capacity region}
\label{sc:capacity}


We address the {\em mixed-cast service chain control problem}, where both unicast and multicast packets must be processed by a specified chain of service functions before being delivered to their associated destinations.
The goal is to develop a control policy that maximizes network throughput under both communication and computation constraints. 


We first transform the original problem that has both communication and computation constraints into a network flow problem in a graph that only has link capacity constraints. The transformation simplifies the representation of a flow.
We then limit the routing policy space without reducing the capacity region.
Finally, we characterize the network capacity region.

\subsection{Transformation to a layered graph}

Following the approach of \cite{Cao2017}, we model the flow of packets through a service chain via a layered graph, with one layer per stage of the service chain.
Let $\GG^{(\phi)} = (\GG^{(\phi, 0)},\dots,\GG^{(\phi, M_\phi)})$, with edge set $\EE^{(\phi)}$ and vertex set $\VV^{(\phi)}$,
denote the layered graph associated with service chain $\phi$.
Each layer $\GG^{(\phi, i)}$ is an exact copy of 
the original graph $\GG$, used to represent the routing of packets at stage $i$ of service $\phi$, \ie the routing of packets that have been processed by the first $i$ functions of service $\phi$.
Let $u^{(\phi, i)}$ 
denote the copy of node $u$ in $\GG^{(\phi, i)}$, and edge $(u^{(\phi, i)}, v^{(\phi, i)})$ 
the copy of link $(u,v)$ in $\GG^{(\phi, i)}$.
Across adjacent layers, a directed edge from $u^{(\phi, i-1)}$ to $u^{(\phi, i)}$ for all $u \in \NN_{(\phi,i)}$ is used to represent the computation of function $(\phi,i)$. 
See Fig.~\ref{fig:layered} for an example of the layered graph.


\begin{Proposition} \label{th:mapping}
  There is a one-to-one mapping between a flow from $s^{(\phi, 0)}$ to $\DD^{(\phi, M_\phi)}$ in $\GG^{(\phi)}$ and a flow from $s$ to $\DD$ processed by $\phi$ in $\GG$.
\end{Proposition}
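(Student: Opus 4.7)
The plan is to exhibit explicit maps in both directions and verify that they are inverses. For the forward direction, given a flow $f$ from $s^{(\phi,0)}$ to $\DD^{(\phi,M_\phi)}$ in $\GG^{(\phi)}$, I would define its projection $\tilde f$ onto $\GG$ by setting, for every physical edge $(u,v)\in\EE$, $\tilde f(u,v) = \sum_{i=0}^{M_\phi} f\bigl(u^{(\phi,i)}, v^{(\phi,i)}\bigr)$, and by interpreting the flow on each inter-layer edge $\bigl(u^{(\phi,i-1)}, u^{(\phi,i)}\bigr)$ as processing by function $(\phi,i)$ at node $u$. Since inter-layer edges only exist at $u\in\NN_{(\phi,i)}$ and must be traversed in the order $i=1,\ldots,M_\phi$ for any unit of $f$ to reach the final layer, $\tilde f$ automatically satisfies the requirement of being processed by the full chain $\phi$ at valid computation nodes in the prescribed order.

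For the backward direction, given a flow in $\GG$ from $s$ to $\DD$ that is processed by $\phi$, I would lift it to $\GG^{(\phi)}$ by decomposing each packet's trajectory into $M_\phi+1$ contiguous sub-paths separated by processing events, placing sub-path $i$ into layer $\GG^{(\phi,i)}$, and mapping the $i$-th processing event at node $u$ to the inter-layer edge $\bigl(u^{(\phi,i-1)},u^{(\phi,i)}\bigr)$. The endpoints map naturally as $s\mapsto s^{(\phi,0)}$ and $d\mapsto d^{(\phi,M_\phi)}$ for each $d\in\DD$.

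I would then verify that the two maps are inverses, which reduces to an edge-by-edge check that projection followed by lifting, and vice versa, returns the original flow. The main obstacle I expect is handling the scaling factors $\xi^{(\phi,i)}$ and multicast duplication consistently: each inter-layer edge must be treated as a gain-scaling edge so that ordinary flow conservation in $\GG^{(\phi)}$ corresponds to the generalized flow-conservation laws induced by the service chain in $\GG$ (output rate equals $\xi^{(\phi,i)}$ times input rate), and each $d^{(\phi,M_\phi)}\in\DD^{(\phi,M_\phi)}$ must be identified with its physical copy $d\in\DD$ so that the multicast delivery requirements line up across the two representations. Once this bookkeeping is fixed, both maps are well-defined and mutually inverse by construction, yielding the claimed one-to-one correspondence.
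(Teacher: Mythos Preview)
Your proposal is correct and follows essentially the same approach as the paper: both arguments rest on the natural correspondence that stage-$i$ transmission on $(u,v)$ in $\GG$ matches the edge $(u^{(\phi,i)},v^{(\phi,i)})$ in $\GG^{(\phi)}$, and processing of function $(\phi,i)$ at node $u$ matches the inter-layer edge $(u^{(\phi,i-1)},u^{(\phi,i)})$. The paper's proof is a terse two-line sketch of this correspondence, whereas you spell out the forward projection and backward lifting explicitly; one small caution is that your summed quantity $\tilde f(u,v)=\sum_i f(u^{(\phi,i)},v^{(\phi,i)})$ is the aggregate load rather than the stage-decomposed flow, so to keep the map injective you should retain the per-stage data (as the paper does), not just the sum.
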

\begin{proof}
Let a flow be processed by function $(\phi,i)$ at node $u \in \NN_{(\phi,i)}\subseteq\VV$. Then, by construction of the layered graph, an equivalent flow must traverse link $(u^{(\phi, i-1)},u^{(\phi, i)})\in\EE^{(\phi)}$.
Similarly, let a flow that has been processed by the first $i$ functions of service $\phi$ traverse link $(u,v)\in\EE$. Then, an equivalent flow must traverse link $(u^{(\phi, i)},v^{(\phi, i)})\in \EE^{(\phi)}$.
Under this mapping, every flow processed by $\phi$ in $\GG$ corresponds to a flow in $\GG^{(\phi)}$, and vice versa.
\end{proof}

We now state generalized flow conservations laws in the layered graph that readily apply to the original graph by Proposition \ref{th:mapping}.

Let $f_{u^{(\phi,i)} v^{(\phi,i)}}$ denote the flow rate on link $(u^{(\phi,i)}, v^{(\phi,i)})$, \ie the rate of {\em stage-$i$} packets 
on link $(u,v)$, where a stage-$i$ packet is a packet that has been processed by the first $i$ functions in $\phi$, and not 
by functions $(\phi,i+1), \dots, (\phi, M_\phi)$.
Similarly, $f_{u^{(\phi,i-1)} u^{(\phi,i)}}$ denotes the flow rate on link $(u^{(\phi, i-1)}, u^{(\phi, i)})$, \ie the computation rate at node $u$ for processing stage-$(i-1)$ packets into stage-$i$ packets via function $(\phi,i)$.

We first focus on unicast traffic, where no packet duplication is required.\footnote{Packet duplication is different from flow scaling. Flow scaling is a result of service function processing. An expanded flow, which is a function output, contains different packets. Packet duplication makes identical copies of a packet, which may be forwarded along different routes to reach different destinations for multicast.}
Note that due to non-unit computation requirements and flow scalings, traditional flow conservation does not hold even for unicast traffic.
For a given node $u^{(\phi, i)} \in \GG^{(\phi, i)}$, the following {\em generalized flow conservation} law holds:
\begin{align}\label{flowcons}
   & \sum_{v^{(\phi,i)} \in \VV^{(\phi)}} f_{v^{(\phi,i)} u^{(\phi,i)}} + \frac{\xi^{(\phi,i)}}{r^{(\phi,i)}} f_{u^{(\phi,i-1)} u^{(\phi,i)}} \notag\\
 &= \sum_{v^{(\phi,i)} \in \VV^{(\phi)}} f_{u^{(\phi,i)} v^{(\phi,i)}} + \frac{1}{r^{(\phi,i + 1)}} f_{u^{(\phi,i)} u^{(\phi,i + 1)}} .
\end{align}


\begin{figure}
\centering{\includegraphics[width=\linewidth]{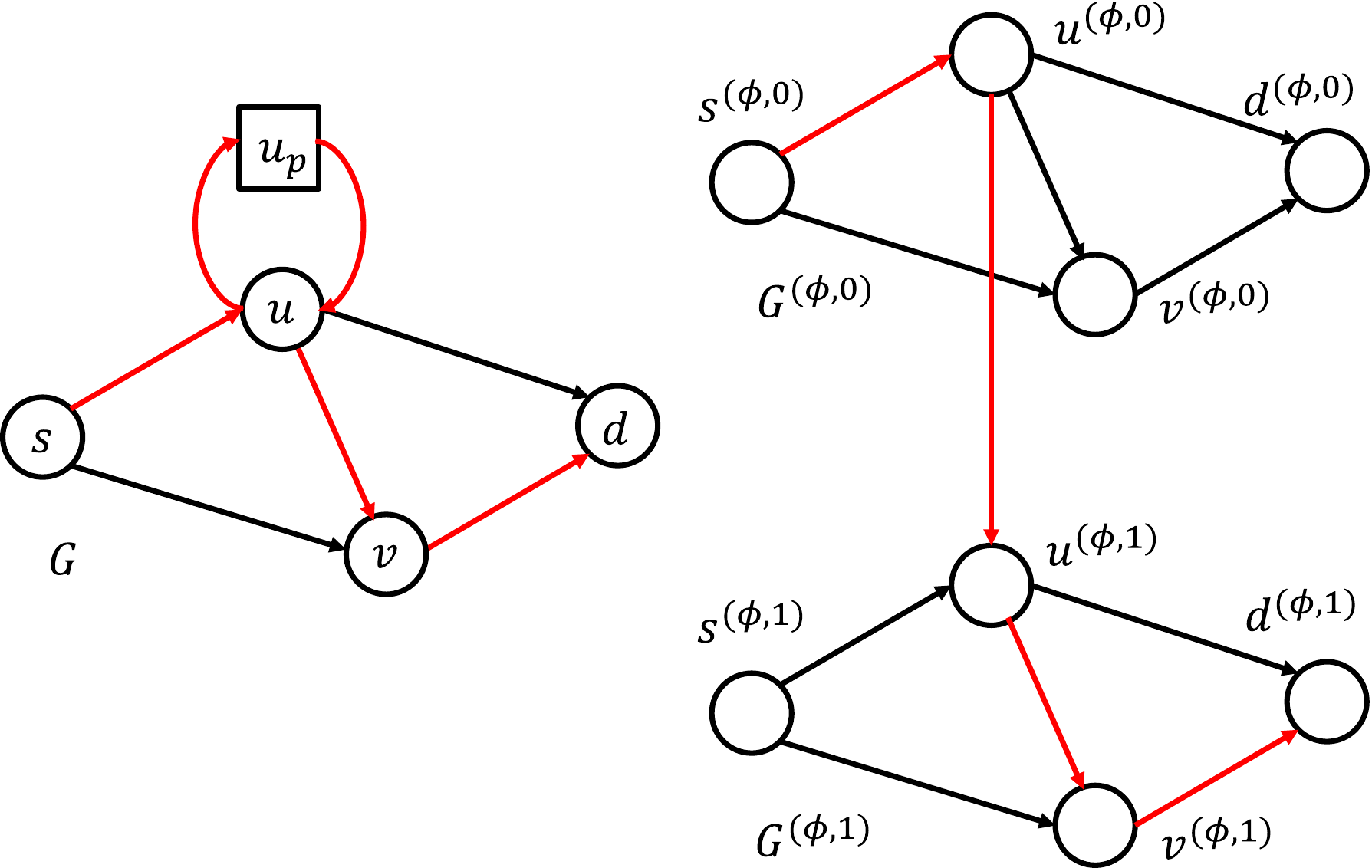}}
\caption{
The left figure is the original graph $\GG$, where $u$ is the only computation node for the single function in $\phi$. A dummy node $u_p$ and connections to $u$ are added to illustrate the availability of service function processing at node $u$. The right figure is the layered graph $\GG^{(\phi)}$.}
\label{fig:layered}
\end{figure}

In the case of multicast traffic, packet duplication is necessary for a packet to reach multiple destinations. Packet duplications can happen at any stage of a service chain. Suppose that a stage-$i$ packet is duplicated. Then, all the copies must be processed by functions $(\phi,i+1),\dots,(\phi,M_\phi)$ before reaching destinations in $\DD$. Equivalently, in the layered graph $\GG^{(\phi)}$, if a packet is duplicated at a node in $\GG^{(\phi, i)}$, then all the copies need to
travel through the links that cross the remaining $M_{\phi}-i$ layers
before reaching a node in $\DD^{(\phi,M_\phi)}$.
The {\em generalized flow conservation and packet duplication} law states that generalized flow conservation \eqref{flowcons} holds at the nodes where there is no packet duplication.

Given the flow rates in the layered graph and the mapping of Proposition \ref{th:mapping}, the flow rates in the original graph can be easily derived.
The communication rate on link $(u,v) \in \GG$, 
computed as the sum over the flow rates on links $(u^{(\phi, i)}, v^{(\phi, i)})$, $\forall \phi \in \Phi, i \in \{0,\dots,M_\phi\}$, and the computation rate at node $u \in \GG$, 
computed as the sum over the flow rates on links $(u^{(\phi, i - 1)}, u^{(\phi, i)})$, $\forall \phi \in \Phi, i \in \{1,\dots,M_\phi\}$ are subject to communication and computation capacity constraints:
\begin{align*}
 & \sum_{\phi \in \Phi, i \in \{0,\dots,M_\phi\}} f_{u^{(\phi,i)} v^{(\phi,i)}} \leq \mu_{uv}, \\
 & \sum_{\phi \in \Phi, i \in \{1,\dots,M_\phi\}} f_{u^{(\phi,i-1)} u^{(\phi,i)}} \leq \mu_{u}.
\end{align*}



\subsection{Policy space} \label{sc:policy}
An admissible policy $\pi$ for the mixed-cast service chain control problem consists of two actions at every time slot $t$.
\begin{enumerate}
  \item {\bf Route selection:} For a commodity-$(c,\phi)$ packet that originates at $s_{c}$ and is destined for $\DD_{c}$, choose a set of links $E^{(c,\phi)} \subseteq \EE^{(\phi)}$, and assign a number of packets\footnote{Recall that a commodity-$(c,\phi)$ input packet can be expanded to multiple packets due to flow scaling and packet duplication.} on each link that satisfies the generalized conservation law for unicast traffic and the generalized conservation and duplication law for multicast traffic.
  \item {\bf Packet scheduling:} Transmit packets through every link in $\EE$ according to a schedule that respects capacity constraints.
\end{enumerate}

The set of all admissible policies is denoted by $\Pi$. The set $\Pi$ includes policies that may use past and future arrival and control information.

Let $\mathcal P^{(c,\phi),\pi}(t)$ denote the packets that are originated at $s_{c}$, processed by $\phi$, and delivered to every node in $\DD_{c}$ under policy $\pi$ up to time $t$. Let $R^{(c,\phi),\pi}(t) = |\mathcal P^{(c,\phi),\pi}(t)|$ denote the number of such packets. The number of packets received by any node in $\DD_{c}$ is at least $\prod_{i=1}^{M_\phi} \xi^{(\phi,i)} R^{(c,\phi),\pi}(t)$ due to flow scaling. We characterize the network throughput using \emph{arrival rates}. A policy $\pi$ \emph{supports} an arrival rate vector $\boldsymbol{\lambda}$ if
\begin{equation}\label{eq:support}
  \liminf_{t \rightarrow \infty} \frac{R^{(c,\phi),\pi}(t)}{t} = \lambda^{(c,\phi)}, ~~ \forall (c,\phi) \in (\CC,\Phi), \text{ w.p.}~1.
\end{equation}

The network layer capacity region is the set of all supportable arrival rates.
\begin{equation}\label{eq:region}
  \Lambda(\GG,\CC,\Phi) = \{\boldsymbol{\lambda} \in \mathbb{R}_+^{|\CC||\Phi|} : \exists \pi \in \Pi \text{ supporting } \boldsymbol{\lambda}\}
\end{equation}



We next restrict the set of admissible routes without reducing the capacity region. 
A route is \emph{efficient} if every packet never visits the same node in $\GG^{(\phi)}$ more than once. For example, if there is no flow scaling, a unicast packet is transmitted through a path from the source to the destination, without cycles, and a multicast packet is transmitted and duplicated through a tree that connects the source and the set of destinations. It suffices to consider efficient routes, by Lemma \ref{th:efficient}, whose proof is in Appendix \ref{sc:ap_restrict}.

\begin{lem} \label{th:efficient}
  Any arrival rate $\boldsymbol{\lambda}$ in the capacity region can be supported by a policy that only uses efficient routes.
\end{lem}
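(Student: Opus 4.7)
The plan is to take any admissible policy that supports $\boldsymbol{\lambda}$ and show that, packet-by-packet, each inefficient route can be locally shortened into an efficient one without losing any delivery and without increasing communication or computation usage on any resource. Since the capacity region is defined by existence of a supporting policy, producing one whose routes are all efficient is enough.

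Fix $\pi \in \Pi$ with $\liminf_{t \to \infty} R^{(c,\phi),\pi}(t)/t = \lambda^{(c,\phi)}$ for every $(c,\phi)$. By Proposition \ref{th:mapping}, it suffices to reason in the layered graph $\GG^{(\phi)}$. For a delivered unicast commodity-$(c,\phi)$ packet, $\pi$ specifies a walk $W$ from $s_c^{(\phi,0)}$ to $d_c^{(\phi,M_\phi)}$. If $W$ visits some vertex $w \in \VV^{(\phi)}$ more than once, let $j_1 < j_2$ be the first and last indices at which $w$ appears, and splice out the closed sub-walk between positions $j_1$ and $j_2$. The resulting walk is still an $s_c^{(\phi,0)}$-to-$d_c^{(\phi,M_\phi)}$ walk, uses a subset of the original edges (including the cross-layer computation edges), and has strictly fewer vertex-repetitions. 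Iterating produces an efficient route.

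For a multicast packet, $\pi$ specifies a directed subgraph $H \subseteq \GG^{(\phi)}$ rooted at $s_c^{(\phi,0)}$ such that, accounting for the branchings performed by the duplication operation, every node in $\DD_c^{(\phi,M_\phi)}$ is reachable from the root along directed edges of $H$. If some node $w$ is reached by more than one directed path from the root in $H$, keep exactly one such path (say the first in a chosen ordering) and delete every edge of $H$ that lies on any other path into $w$ but is not needed to reach some other destination. Reachability of $\DD_c^{(\phi,M_\phi)}$ from the root is preserved because at least one incoming copy at $w$ is retained and all previously reachable descendants of $w$ remain reachable through it. Repeatedly applying this pruning terminates in a directed subgraph in which every node is visited at most once, \ie a directed Steiner tree from $s_c^{(\phi,0)}$ to $\DD_c^{(\phi,M_\phi)}$, which is by definition an efficient route.

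Let $\pi'$ be the policy obtained by replacing every per-packet route of $\pi$ with its pruned efficient counterpart, and keeping the same packet schedule on the surviving edges. Each edge and each cross-layer computation edge is used at most as often under $\pi'$ as under $\pi$, so the aggregate link loads and node computation loads under $\pi'$ are pointwise dominated by those under $\pi$; hence $\pi' \in \Pi$. Moreover, every packet counted in $R^{(c,\phi),\pi}(t)$ is still delivered to all of $\DD_c$ under $\pi'$, so $R^{(c,\phi),\pi'}(t) \geq R^{(c,\phi),\pi}(t)$, and \eqref{eq:support} continues to hold with equality by the definition of $\Lambda(\GG,\CC,\Phi)$. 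The main obstacle I expect is the multicast pruning step: one must argue carefully that after repeatedly eliminating redundant incoming paths at revisited nodes, the procedure terminates and still covers every destination. This is handled by choosing at each step a revisited node that is closest to the root (in the partial order induced by the current $H$), so that the pruning strictly decreases the number of multiply-visited nodes without disconnecting any descendant destination.
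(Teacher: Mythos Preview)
Your approach is the paper's: remove cycles for unicast, prune redundant incoming paths for multicast, and argue that the pruned policy uses no more of any resource and delivers every packet at least as early. The unicast case is fine.

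The multicast case has a real gap, and it is exactly where you combine ``keep one incoming path (first in a chosen ordering)'' with ``keep the same packet schedule on the surviving edges.'' Suppose a node $w$ is reached by copy $A$ at time $5$ and copy $B$ at time $10$, and under $\pi$ an outgoing edge at $w$ is scheduled to forward copy $A$ at time $6$. If your ``chosen ordering'' retains path $B$, the surviving schedule at that outgoing edge fires at time $6$ while the retained copy only reaches $w$ at time $10$; the schedule is no longer causal and your inequality $R^{(c,\phi),\pi'}(t)\ge R^{(c,\phi),\pi}(t)$ need not hold. The paper resolves this by retaining the copy that arrives \emph{first in time} and, if $w$ has several outgoing edges under $\pi$, performing the duplications at $w$; with that choice the arrival time at every surviving node equals the first arrival time under $\pi$, so every surviving scheduled transmission remains feasible. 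Your ``closest-to-the-root'' processing order addresses termination of the pruning loop but does not address this causality issue; making the retained path the earliest-arriving one (and adding the explicit re-duplication at $w$) closes the gap.
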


Moreover, we further restrict the route of a unicast packet to be a \emph{service chain path}, and the route of a multicast packet to be a \emph{service chain Steiner tree}, without reducing the capacity region by proving Theorem \ref{th:restrict}. Note that under flow scaling, one commodity-$(c,\phi)$ packet that originates at $s_c$ is scaled to $\prod_{j=1}^{i-1} \xi^{(\phi,j)}$ packets at stage-$(i-1)$. To process them, 
function $(\phi,i)$ requires
$x^{(\phi,i)} = r^{(\phi,i)} \prod_{j=1}^{i-1} \xi^{(\phi,j)}$  computation resource units, and outputs
$w^{(\phi,i)} = \prod_{j = 1}^{i} \xi^{(\phi,j)}$ packets. Let $w^{(\phi,0)} = \xi^{(\phi,0)} = 1$. We define a service chain path as follows.

\begin{defn} \label{def:path}
A commodity-$(c,\phi)$ unicast packet is routed over a \emph{service chain path} $T^{(c,\phi)}$, if
\begin{enumerate}
  \item $T^{(c,\phi)}$ is a path from $s_{c}^{(\phi,0)}$ to $d_{c}^{(\phi, M_\phi)}$ in $\GG^{(\phi)}$;
  \item $w^{(\phi,i)}$ packets are routed over a link in $T^{(c,\phi)}$ that belongs to $\GG^{(\phi,i)}$;
  \item $x^{(\phi,i)}$ packets are routed over a link in $T^{(c,\phi)}$ that connects $\GG^{(\phi,i-1)}$ and $\GG^{(\phi,i)}$.
\end{enumerate}
\end{defn}
It is easy to verify that the generalized flow conservation law holds in a service chain path. Clearly, a service chain path is an efficient route, since every node in $\GG^{(\phi)}$ is visited only once by the same packet. However, an efficient route does not have to be a service chain path. If a packet is expanded into two packets via intermediate service processing, the two packets can take different paths without violating route efficiency. For example, in Fig.~\ref{fig:path}, the left figure illustrates a service chain path, while the right figure illustrates an efficient route that is not a service chain path.
\begin{figure}
\centering{\includegraphics[width=\linewidth]{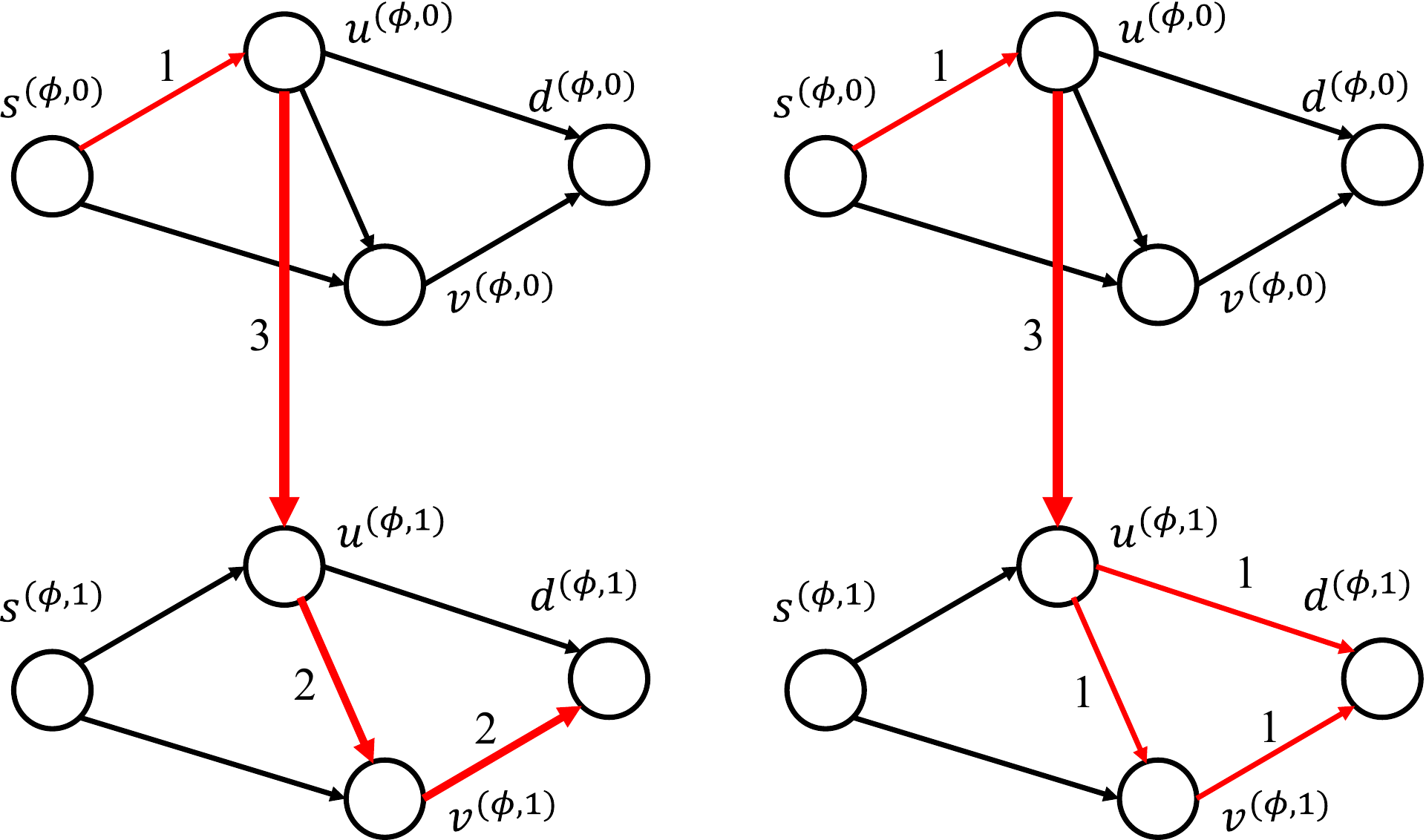}}
\caption{The left figure illustrates a service chain path, and the right figure illustrates an alternative efficient route that is not a service chain path. The number adjacent to a link indicates the number of packets on the link. Scaling factors: $x^{(\phi,1)} = 3$; $w^{(\phi,1)} = 2$.}
\label{fig:path}
\end{figure}

We next define a service chain Steiner tree.
\begin{defn}
A commodity-$(c,\phi)$ multicast packet is routed over a \emph{service chain Steiner tree} $T^{(c,\phi)}$, if
\begin{enumerate}
  \item $T^{(c,\phi)}$ is a Steiner tree (arborescence) that is rooted at $s_{c}^{(\phi,0)}$ and connected to $\DD_{c}^{(\phi, M_\phi)}$ in $\GG^{(\phi)}$;
  \item $w^{(\phi,i)}$ packets are routed over a link in $T^{(c,\phi)}$ that belongs to $\GG^{(\phi,i)}$;
  \item $x^{(\phi,i)}$ packets are routed over a link in $T^{(c,\phi)}$ that connects $\GG^{(\phi,i-1)}$ and $\GG^{(\phi,i)}$.
\end{enumerate}
\end{defn}

If a packet is routed over a service chain Steiner tree $T^{(c,\phi)}$, then packet duplications occur at every node that has more than one outgoing edge in $T^{(c,\phi)}$. The number of packet duplications at a node equals its number of outgoing edges in $T^{(c,\phi)}$ minus one. The generalized flow conservation holds at all other nodes.

We conclude this section with Theorem \ref{th:restrict}, whose proof is in Appendix \ref{sc:ap_restrict}.
\begin{thm} \label{th:restrict}
  There exists a policy that chooses a convex combination of service chain paths for each incoming unicast packet, and a convex combination of service chain Steiner trees for each incoming multicast packet, to support any arrival rate $\boldsymbol{\lambda}$ in the capacity region.
\end{thm}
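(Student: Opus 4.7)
My plan is to start from any admissible policy guaranteed by Lemma \ref{th:efficient} and then perform a flow decomposition on its long-run edge rates in the layered graph. Specifically, assume without loss of generality that $\pi$ supports $\boldsymbol{\lambda}$ and uses only efficient routes. For each commodity $(c,\phi)$, let $f$ denote the time-averaged flow rates in $\GG^{(\phi)}$ induced by $\pi$; these satisfy the generalized flow-conservation-and-duplication law and the communication/computation capacity constraints.

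To strip out the coefficients introduced by flow scaling, define the normalization
\[
g_{u^{(\phi,i)} v^{(\phi,i)}} = \frac{f_{u^{(\phi,i)} v^{(\phi,i)}}}{w^{(\phi,i)}}, \qquad g_{u^{(\phi,i-1)} u^{(\phi,i)}} = \frac{f_{u^{(\phi,i-1)} u^{(\phi,i)}}}{x^{(\phi,i)}}.
\]
Using $w^{(\phi,i)} = \xi^{(\phi,i)} w^{(\phi,i-1)}$ and $x^{(\phi,i)} = r^{(\phi,i)} w^{(\phi,i-1)}$, both coefficients $\xi^{(\phi,i)} x^{(\phi,i)}/(r^{(\phi,i)} w^{(\phi,i)})$ and $x^{(\phi,i+1)}/(r^{(\phi,i+1)} w^{(\phi,i)})$ collapse to $1$, so that $g$ obeys the standard unit-coefficient conservation law (with duplication, in the multicast case), with source outflow and each destination inflow equal to $\lambda^{(c,\phi)}$.

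For unicast commodities, $g$ is a standard $s$-$t$ flow of value $\lambda^{(c,\phi)}$ in the layered graph. After canceling cycles — which carry no net flow and hence preserve both the edge rates and the terminal fluxes — the acyclic residual admits the Ford--Fulkerson path decomposition $g = \sum_k \alpha_k \mathbf{1}_{P_k}$, where each $P_k$ is a simple path from $s_c^{(\phi,0)}$ to $d_c^{(\phi,M_\phi)}$ and $\sum_k \alpha_k = \lambda^{(c,\phi)}$. Multiplying each $P_k$ back by the weights $w^{(\phi,i)}$ on stage-$i$ edges and $x^{(\phi,i)}$ on computation edges recovers exactly Definition \ref{def:path}; routing every arriving $(c,\phi)$ packet along $P_k$ with probability $\alpha_k/\lambda^{(c,\phi)}$ reproduces the edge-wise rates of $\pi$ and therefore inherits its capacity feasibility and throughput, with any packet scheduler that sufficed for $\pi$ still sufficing here.

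The main obstacle is the multicast case: I need to show that the normalized multicast flow $g$ — in which every intermediate node satisfies outflow $\geq$ inflow (the slack tracking duplications) and every destination absorbs exactly $\lambda^{(c,\phi)}$ — admits an analogous decomposition $g = \sum_k \beta_k \mathbf{1}_{T_k}$ into Steiner arborescences $T_k$ rooted at $s_c^{(\phi,0)}$ and spanning $\DD_c^{(\phi,M_\phi)}$, with $\sum_k \beta_k = \lambda^{(c,\phi)}$. I would proceed by iterative peeling: first cancel cycles in $\mathrm{supp}(g)$ to reduce to an acyclic multicast flow; then, using the fact that every destination is reachable from the source at rate $\lambda^{(c,\phi)}$ within the support, extract a single arborescence $T \subseteq \mathrm{supp}(g)$ spanning $\{s_c^{(\phi,0)}\} \cup \DD_c^{(\phi,M_\phi)}$ — for instance, by greedily attaching a path in supp$(g)$ to each next uncovered destination, working along a topological order of the DAG; then set $\epsilon = \min_{e \in T} g(e) > 0$ and replace $g$ by $g - \epsilon \mathbf{1}_T$. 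The residual remains a valid multicast-with-duplication flow, strictly shrinks the support, and the procedure therefore terminates in finitely many iterations. Unnormalizing each $T_k$ produces a service chain Steiner tree, and routing each arriving multicast packet along $T_k$ with probability $\beta_k/\lambda^{(c,\phi)}$ yields the desired policy.
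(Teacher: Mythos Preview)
Your unicast argument via normalization and path decomposition is correct and matches the paper's in spirit (the paper phrases it in terms of ``micro packets,'' but the content is the same; one minor slip is that cycle cancellation does not ``preserve the edge rates'' --- it reduces them on cycle edges --- though this only helps feasibility). The multicast argument, however, has a real gap at the peeling step: the claim that ``the residual remains a valid multicast-with-duplication flow'' after subtracting $\epsilon\mathbf{1}_T$ for an \emph{arbitrary} Steiner arborescence $T\subseteq\mathrm{supp}(g)$ is false. Take source $s$, destinations $d_1,d_2$, intermediate nodes $a,b$, and
\[
g(s{\to}a)=g(s{\to}b)=g(a{\to}d_1)=g(a{\to}d_2)=g(b{\to}d_1)=g(b{\to}d_2)=\tfrac12,
\]
which is exactly the normalized time-average of the efficient-route policy that sends each packet along $s{\to}a{\to}\{d_1,d_2\}$ or $s{\to}b{\to}\{d_1,d_2\}$ with probability $\tfrac12$ each (so $\lambda=1$). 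Your greedy construction may legitimately return $T=\{s{\to}a{\to}d_1,\ s{\to}b{\to}d_2\}$, an arborescence in $\mathrm{supp}(g)$, with $\epsilon=\tfrac12$. After subtraction the source has zero outflow while each destination still requires rate $\tfrac12$, and $a,b$ have positive outflow with zero inflow --- the residual is not a multicast flow at all. The root cause is that the system ``outflow $\ge$ inflow at every node, source emits $\lambda$, each sink absorbs $\lambda$'' strictly contains the convex hull of arborescence indicators; greedy peeling of a support arborescence can step outside that hull even when $g$ started inside it.

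The paper sidesteps this by never aggregating across packets. It fixes a single efficient route (for one arriving packet), splits that packet into $z$ equal \emph{micro packets} whose sizes absorb the $w^{(\phi,i)},x^{(\phi,i)}$ scalings, and then argues that along an efficient route each micro-packet identity traces a subgraph in which every node has in-degree at most one --- hence a path (unicast) or an arborescence (multicast). This yields the convex combination of service chain paths/Steiner trees \emph{per packet}, without any arborescence-decomposition theorem for aggregate flows. To repair your route you would either have to prove that $\mathrm{supp}(g)$ always contains an arborescence whose branching pattern is dominated by the node-excess of $g$ (so that subtraction stays feasible) --- a nontrivial structural claim you have not established --- or revert to a per-route decomposition as the paper does.
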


Due to Theorem \ref{th:restrict}, in the following, we restrict our attention to routing policies that use linear combination of service chain paths or service chain Steiner trees to route incoming packets, without reducing network throughput.

\subsection{Capacity region}
For any arrival rate $\boldsymbol{\lambda} \in \Lambda(\GG,\CC,\phi)$, there exists an admissible policy $\pi$ that takes restricted routes and supports $\boldsymbol{\lambda}$.
Let $\TT^{(c,\phi)}$ denote the set of all service chain paths (or Steiner trees) for commodity-$(c,\phi)$ packets. By taking the time average over the actions of $\pi$, for each commodity $(c,\phi)$, there exists a randomized flow decomposition and routing on $\TT^{(c,\phi)}$. Let $\lambda_k^{(c,\phi)}$ be the average (arrival) flow rate of commodity-$(c,\phi)$ packets over $T_k^{(c,\phi)} \in \TT^{(c,\phi)}$.
\begin{eqnarray} \label{eq:decomp}
  \lambda^{(c,\phi)} = \sum_{T_k^{(c,\phi)} \in \TT^{(c,\phi)}} \lambda_k^{(c,\phi)}, ~~ \forall (c,\phi) \in (\CC, \Phi).
\end{eqnarray}

Moreover, flows should satisfy communication and computation capacity constraints. Commodity-$(c,\phi)$ flow contributes a rate $w^{(\phi,i)} \lambda_k^{(c,\phi)}$ on communication link $(u,v)$ if $(u^{(\phi,i)}, v^{(\phi,i)}) \in T_k^{(c,\phi)}$, and a rate of $x^{(\phi,i)} \lambda_k^{(c,\phi)}$ on computation node $u$ if $(u^{(\phi,i-1)}, u^{(\phi,i)}) \in T_k^{(c,\phi)}$.
Let $\mathcal S_{uv} = \{(k,i,c,\phi): (u^{(\phi,i)},v^{(\phi,i)}) \in T_k^{(c,\phi)}, T_k^{(c,\phi)} \in \TT^{(c,\phi)}, i \in \{0,\dots,M_\phi\}, (c,\phi) \in (\CC, \Phi)\}$
denote the set of commodities that use link $(u,v)$ for transmission.
Let $\mathcal S_u = \{(k,i,c,\phi): (u^{(\phi,i - 1)},u^{(\phi,i)}) \in T_k^{(c,\phi)}, T_k^{(c,\phi)} \in \TT^{(c,\phi)}, i \in \{1,\dots,M_\phi\}, (c,\phi) \in (\CC, \Phi)\}$
denote the set of commodities that use node $u$ for processing.
The communication and computation capacity constraints are
represented by (\ref{eq:comm}) and (\ref{eq:comp}), respectively.

\begin{eqnarray}
  \sum_{(k,i,c,\phi) \in \SSS_{uv}} \hspace{-2mm} w^{(\phi,i)}\lambda_k^{(c,\phi)} \leq \mu_{uv}, ~~ \forall (u,v) \in \EE, \label{eq:comm}\\
  \sum_{(k,i,c,\phi) \in \SSS_u} \hspace{-2mm} x^{(\phi,i)}\lambda_k^{(c,\phi)} \leq \mu_{u}, ~~~~~~~~~ \forall u \in \VV. \label{eq:comp}
\end{eqnarray}

To conclude, the capacity region is characterized by the arrival rates $\boldsymbol{\lambda} = \{\lambda^{(c,\phi)}:(c,\phi) \in (\CC, \Phi)\}$ that satisfy constraints (\ref{eq:decomp}), (\ref{eq:comm}), and (\ref{eq:comp}).

\section{Dynamic routing in a virtual system} \label{sc:routing}

In this section, we study a virtual queueing system for a distributed computing network, 
whose
simplified dynamics allows us to develop a dynamic routing algorithm that guarantees that the average arrival rate at a virtual link is no more than its service rate.
We then formalize the connection between the virtual and physical systems in Section \ref{sc:physical}.

We consider a virtual queueing system $\{\tilde{Q}_{uv}(t), \forall (u,v) \in \EE\}$ and $\{\tilde{Q}_{u}(t), \forall u \in \VV\}$ for network $\GG$.
In contrast to the physical system, in which packets travel through the links in its route sequentially, in the virtual system, a packet immediately enters the virtual queues of all the links in its route, upon arrival at the network.
The number of packets that arrive at the communication queue $\tilde{Q}_{uv}$ at time $t$, denoted by $A_{uv}(t)$, is the sum of the number of packets routed on $(u^{(\phi, i)}, v^{(\phi, i)})$, $\forall \phi \in \Phi, i \in \{0,\dots,M_\phi\}$ at time $t$. Similarly, the number of packets $A_{u}(t)$ that arrive at the computation queue $\tilde{Q}_{u}$ at time $t$ is the sum of the number of packets routed on $(u^{(\phi, i-1)}, u^{(\phi, i)})$, $\forall \phi \in \Phi, i \in \{1,\dots,M_\phi\}$ at time $t$. The value $A_{uv}(t)$ indicates the total number of packets that \emph{will be transmitted} through link $(u,v)$, in order to serve the packets (and their associated packets after processing) that arrive at time $t$, based on the routing decision. The value $A_{u}(t)$ indicates the total amount of computation at node $u$ needed to process these packets.
The departure rate of the packets in $\tilde{Q}_{uv}$ is equal to the transmission capacity of link $(u,v)$, $\mu_{uv}$, and the departure rate of the packets in $\tilde{Q}_u$ is equal to the processing capacity of node $u$, $\mu_u$.


We study the queueing dynamics under a policy that routes all the packets that belong to the same commodity and arrive at the same time, through a service chain path or service chain Steiner tree.
Let $A^{(c, \phi)}(t)$ be the number of commodity-$(c,\phi)$ packets that arrive at the network at time $t$. Let $T^{(c, \phi),\pi}$ denote the path or tree chosen under policy $\pi$ at time $t$. Let $A_{uv}^{(c,\phi),\pi}(t)$ denote the number of packets that arrive at the virtual communication queue $(u,v)$ at time $t$. Recall that $w^{(\phi,i)}$ and $x^{(\phi,i)}$ were defined before Definition \ref{def:path} in Section \ref{sc:capacity}.
\begin{equation} \label{eq:scaling1}
A_{uv}^{(c, \phi),\pi}(t) = \sum_{(u^{(\phi, i)}, v^{(\phi, i)}) \in T^{(c, \phi),\pi}} w^{(\phi, i)} A^{(c, \phi)}(t).
\end{equation}
Let $A_{u}^{(c, \phi), \pi}(t)$ denote the number of packets that arrive at the virtual computation queue at $u$ at time $t$.
\begin{equation} \label{eq:scaling2}
A_{u}^{(c, \phi),\pi}(t) = \sum_{(u^{(\phi, i-1)}, u^{(\phi, i)}) \in T^{(c, \phi),\pi}} x^{(\phi, i)} A^{(c, \phi)}(t).
\end{equation}


The virtual queue lengths $\tilde Q_{uv}(t)$ and $\tilde Q_u(t)$ evolve according to the following recursion, where $(a)^+ = \max(a, 0)$.
\begin{eqnarray*}\label{eq:lindley}
  \tilde Q_{uv}(t + 1) &=& \Big(\tilde Q_{uv}(t) + \sum_{(c,\phi)\in(\CC,\Phi)} A_{uv}^{(c, \phi),\pi}(t) - \mu_{uv} \Big)^+, \\
  \tilde Q_{u}(t + 1) &=& \Big(\tilde Q_{u}(t) + \sum_{(c,\phi)\in(\CC,\Phi)} A_{u}^{(c, \phi),\pi}(t) - \mu_{u} \Big)^+.
\end{eqnarray*}

Based on the virtual queueing system, we study the following
\emph{dynamic routing policy $\pi^*$.}
When $A^{(c, \phi)}(t)$ packets arrive at time $t$, policy $\pi^*$ chooses a route $T^{(c, \phi),\pi^*}$ by minimizing
\begin{eqnarray}
  &&\sum_{(u,v)\in \EE} \tilde{Q}_{uv}(t)A_{uv}^{(c,\phi),\pi}(t) + \sum_{u\in \VV} \tilde{Q}_{u}(t)A_{u}^{(c,\phi),\pi}(t) \nonumber \\
  &=& \hspace{-3mm}A^{(c, \phi)}(t) \Big( \hspace{-9mm}\sum_{(u^{(\phi,i)}, v^{(\phi,i)}) \in \EE^{(\phi)}} \hspace{-8mm} w^{(\phi,i)} \tilde{Q}_{uv}(t) 1\{(u^{(\phi,i)}, v^{(\phi,i)}) \in T^{(c, \phi),\pi}\} \nonumber \\
  && + \hspace{-10mm} \sum_{(u^{(\phi,i-1)},u^{(\phi,i)}) \in \EE^{(\phi)}} \hspace{-10mm} x^{(\phi,i)} \tilde{Q}_u(t) 1\{(u^{(\phi,i-1)}, u^{(\phi,i)}) \in T^{(c, \phi),\pi}\} \Big).\label{eq:mintree}
\end{eqnarray}

\emph{Computation of $\pi^*$: }
Policy $\pi^*$ can be computed by applying standard graph algorithms to the layered graph $\GG^{(\phi)}$. Let the length of link $(u^{(\phi,i)}, v^{(\phi,i)})$ be $ w^{(\phi,i)} \tilde{Q}_{uv}(t)$, and the length of link $(u^{(\phi,i-1)},u^{(\phi,i)})$ be $x^{(\phi,i)} \tilde{Q}_u(t)$. For unicast traffic, the optimal path is the \emph{shortest path} from $s^{(\phi,0)}$ to $d^{(\phi,M_\phi)}$. For multicast traffic, the optimal tree is the \emph{minimum Steiner tree} from $s^{(\phi,0)}$ to $\DD^{(\phi,M_\phi)}$.

\emph{Interpretation of $\pi^*$: }
Policy $\pi^*$ avoids routing packets through overloaded links and nodes by assigning higher costs to those with larger virtual queue lengths. Moreover, the scaling of packets are accounted for the additional load that an incoming packet contributes to the links and nodes. Both load and scaling factors are captured in the design of link cost function: the length (cost) of a link is the product of the scaling factor and the virtual queue length. Policy $\pi^*$ determines the routes in the original graph $\GG$, in addition to the routes in the layered graph $\GG^{(\phi)}$, due to Proposition \ref{th:mapping}.

\emph{Performance of $\pi^*$: }
Policy $\pi^*$ stabilizes the virtual system for any arrival rate in the interior of the capacity region.
\begin{thm} \label{th:strong}
  Under routing policy $\pi^*$, the virtual queue process $\{\tilde{Q}(t)\}_{t \geq 0}$ is strongly stable for any arrival rate that is in the interior of the capacity region. \Ie
  $$\limsup_{T \rightarrow \infty} \frac{1}{T}\sum_{t = 0}^{T - 1}\Big(\sum_{(u,v) \in \EE} \E \tilde{Q}_{uv}(t) + \sum_{u \in \VV} \E \tilde{Q}_{u}(t)\Big) < \infty.$$
\end{thm}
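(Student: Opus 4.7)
The plan is a standard Lyapunov drift argument using the quadratic Lyapunov function
$$L(\tilde{Q}(t)) = \frac{1}{2}\sum_{(u,v)\in\EE}\tilde{Q}_{uv}^{2}(t) + \frac{1}{2}\sum_{u\in\VV}\tilde{Q}_{u}^{2}(t).$$
First I would apply the elementary inequality $((q+a-b)^{+})^{2} \le q^{2} + (a-b)^{2} + 2q(a-b)$ to each virtual Lindley recursion, sum over all links and nodes, and take conditional expectation given $\tilde{Q}(t)$. Because $A^{(c,\phi)}(t)$ is i.i.d.\ with bounded second moments, the second-order terms are absorbed into a finite constant $B$ independent of $\tilde{Q}(t)$, yielding the one-slot drift bound
$$\Delta(t) \le B + \sum_{(u,v)\in\EE}\tilde{Q}_{uv}(t)\bigl(\E[A_{uv}^{\pi^{*}}(t)\mid\tilde{Q}(t)] - \mu_{uv}\bigr) + \sum_{u\in\VV}\tilde{Q}_{u}(t)\bigl(\E[A_{u}^{\pi^{*}}(t)\mid\tilde{Q}(t)] - \mu_{u}\bigr).$$

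Next, I would exploit the interior assumption. Since $\boldsymbol{\lambda}$ lies in the interior of $\Lambda(\GG,\CC,\Phi)$, the characterization \eqref{eq:decomp}--\eqref{eq:comp} together with Theorem \ref{th:restrict} yields an $\epsilon > 0$ and nonnegative weights $\{\lambda_{k}^{(c,\phi)}\}$ summing to $\lambda^{(c,\phi)}$ such that $\sum_{(k,i,c,\phi)\in\SSS_{uv}} w^{(\phi,i)}\lambda_{k}^{(c,\phi)} \le \mu_{uv} - \epsilon$ and $\sum_{(k,i,c,\phi)\in\SSS_{u}} x^{(\phi,i)}\lambda_{k}^{(c,\phi)} \le \mu_{u} - \epsilon$ for every $(u,v)\in\EE$ and $u\in\VV$. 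These weights define a stationary randomized alternative $\pi^{\mathrm{rand}}$ that, independently for each arriving commodity-$(c,\phi)$ packet, picks the service chain path or Steiner tree $T_{k}^{(c,\phi)}$ with probability $\lambda_{k}^{(c,\phi)}/\lambda^{(c,\phi)}$; by construction its expected virtual arrivals satisfy $\E[A_{uv}^{\pi^{\mathrm{rand}}}(t)] \le \mu_{uv}-\epsilon$ and $\E[A_{u}^{\pi^{\mathrm{rand}}}(t)] \le \mu_{u}-\epsilon$, and these expectations are independent of $\tilde{Q}(t)$.

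The defining property of $\pi^{*}$ now enters: by \eqref{eq:mintree} it selects, sample-path by sample-path and commodity by commodity, the route of minimum weighted cost under the current queue lengths, so its conditional weighted-cost expectation is pointwise dominated by that of $\pi^{\mathrm{rand}}$. This gives
$$\sum_{(u,v)}\tilde{Q}_{uv}(t)\,\E[A_{uv}^{\pi^{*}}(t)\mid\tilde{Q}(t)] + \sum_{u}\tilde{Q}_{u}(t)\,\E[A_{u}^{\pi^{*}}(t)\mid\tilde{Q}(t)] \le \sum_{(u,v)}\tilde{Q}_{uv}(t)(\mu_{uv}-\epsilon) + \sum_{u}\tilde{Q}_{u}(t)(\mu_{u}-\epsilon).$$
Substituting into the drift bound collapses the $\mu$ terms and yields the Foster--Lyapunov negative drift
$$\Delta(t) \le B - \epsilon\Bigl(\sum_{(u,v)\in\EE}\tilde{Q}_{uv}(t) + \sum_{u\in\VV}\tilde{Q}_{u}(t)\Bigr).$$
Taking total expectation, telescoping over $t = 0,\dots,T-1$, using $L(\tilde{Q}(T)) \ge 0$, and dividing by $T\epsilon$ then gives $\frac{1}{T}\sum_{t=0}^{T-1}\E[\sum_{(u,v)}\tilde{Q}_{uv}(t) + \sum_{u}\tilde{Q}_{u}(t)] \le B/\epsilon + \E[L(\tilde{Q}(0))]/(T\epsilon)$, and sending $T\to\infty$ delivers strong stability.

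The main obstacle is the second step: producing a stationary randomized policy with uniform $\epsilon$-slack \emph{simultaneously} on every communication link \emph{and} every computation node. This is the step that genuinely uses the capacity-region characterization of Section \ref{sc:capacity} --- in particular the reduction afforded by Theorem \ref{th:restrict} to convex combinations of service chain paths and Steiner trees on the layered graph $\GG^{(\phi)}$, which is what makes constraints \eqref{eq:comm} and \eqref{eq:comp} hold with the same $\epsilon$-slack under a single decomposition. Once such a decomposition is in hand, the pointwise-minimum property of $\pi^{*}$ and the quadratic-Lyapunov drift machinery are routine.
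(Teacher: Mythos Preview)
Your proposal is correct and follows essentially the same approach as the paper's proof: quadratic Lyapunov drift, comparison of $\pi^*$ with a stationary randomized policy obtained from the interior assumption and the path/Steiner-tree decomposition of Theorem~\ref{th:restrict}, and then telescoping. The only differences are cosmetic---the paper omits the factor $1/2$ in $L$, phrases the slack multiplicatively as $(1+\epsilon)\boldsymbol{\bar\lambda}\in\Lambda$ (yielding $\epsilon'=\tfrac{\epsilon}{1+\epsilon}\min\{\mu_{uv},\mu_u\}$) rather than additively, and takes $\tilde Q(0)=0$---none of which affect the argument.
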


The proof of Theorem \ref{th:strong} is based on Lyapunov drift analysis and can be found in Appendix \ref{sc:ap_virtual}. The queue stability implies that the arrival rate at each virtual queue is no more than its service rate.
\section{Control of the physical network}
\label{sc:physical}
In this section, we formalize the connection between the virtual system and the physical system, and develop a throughput-optimal control policy for a distributed computing network. 
Recall that an admissible policy consists of two actions at every time slot: 1) route selection, 2) packet scheduling.

The route selection for an incoming packet to the network is identical to the route selection $\pi^*$ in the virtual system. Suppose that a packet is served (\ie both processed by all the service functions and delivered to the destination) by the network. The amount of traffic that the packet contributes to a physical queue $Q_{uv}$ (or $Q_u$) is the same as the amount of traffic that it contributes to the virtual queue $\tilde Q_{uv}$ (or $\tilde Q_u$). 
Strong stability of virtual queues implies that the average arrival rate is at most the service rate of each virtual queue under $\pi^*$. Therefore, by applying the same routing policy to the physical system, the average arrival rate (or offered load) is at most the service rate for each physical queue. The statement is made precise in the proof of Theorem \ref{th:rate}.

A packet scheduling policy chooses a packet to transmit over a link or to process at a node, when there is more than one packet awaiting service. It was proved in \cite{umw_info17, nto} that an extended nearest-to-origin (ENTO) policy guarantees queue stability, as long as the average arrival rate is no more than the service rate at each queue. The ENTO policy gives higher priority to packets that have traveled a smaller number of hops (\ie closer to their origins). A duplicated packet (for multicast) inherits the hop count of the original packet. In the proof of Theorem \ref{th:rate}, we show that this policy guarantees the stability of physical queues 
even with flow scaling (\ie one packet processed by a first queue may enter a second queue in the form of multiple packets).

The resulting routing and scheduling policy, referred to as Universal Computing Network Control (UCNC), is summarized in Algorithm \ref{alg:umws}.

\begin{algorithm}[h]
\caption{Universal Computing Network Control (UCNC).}
Initialization: $\tilde Q_{uv}(0) = \tilde Q_u(0) = 0$, $\forall (u,v) \in \EE, u \in \VV$. \\

At each time slot $t$:
\begin{enumerate}
\item {\bf Preprocessing}. For an incoming commodity-$(c,\phi)$ packet, construct a layered graph $\GG^{(\phi)}$. Let the cost of link $(u^{(\phi,i)}, v^{(\phi,i)})$ be $ w^{(\phi,i)} \tilde{Q}_{uv}(t)$, and the cost of link $(u^{(\phi,i-1)},u^{(\phi,i)})$ be $x^{(\phi,i)} \tilde{Q}_u(t)$.
\item {\bf Route Selection ($\pi^*$)}. Compute a minimum-cost route $T^{(c,\phi),\pi^*}$ for a commodity-$(c,\phi)$ incoming packet. The packet will follow $T^{(c,\phi),\pi^*}$ for transmission and processing. 
\item {\bf Packet Scheduling (ENTO)}. Each physical link transmits packets and each computation node processes packets according to the ENTO policy.
\item {\bf Virtual Queues Update}.
\begin{eqnarray*}
  \tilde Q_{uv}(t + 1) &=& \Big(\tilde Q_{uv}(t) + \hspace{-3mm} \sum_{(c,\phi)\in(\CC,\Phi)} \hspace{-5mm} A_{uv}^{(c, \phi),\pi^*}(t) - \mu_{uv}\Big)^+; \\
  \tilde Q_{u}(t + 1) &=& \Big(\tilde Q_{u}(t) + \hspace{-3mm} \sum_{(c,\phi)\in(\CC,\Phi)} \hspace{-5mm} A_{u}^{(c, \phi),\pi^*}(t) - \mu_{u}\Big)^+.
\end{eqnarray*}
\end{enumerate}
\label{alg:umws}
\end{algorithm}

In Step 2, a commodity-$(c,\phi)$ packet enters the physical network and will be transmitted and processed in $\GG$ according to $T^{(c,\phi),\pi^*} \subseteq \GG^{(\phi)}$ by the mapping in Proposition \ref{th:mapping}. To implement the algorithm, the packet stores $T^{(c,\phi),\pi^*}$. At time slot $t' \geq t$, if it has been processed by the first $i$ functions and is at node $u$, then it enters the physical queue for link $(u,v)$ if $(u^{(\phi,i)}, v^{(\phi,i)}) \in T^{(c,\phi),\pi^*}$. It enters the computation queue at node $u$ if $(u^{(\phi,i)}, u^{(\phi,i+1)}) \in T^{(c,\phi),\pi^*}$. The packet is duplicated (for multicast) if $u^{(\phi,i)}$ has more than one outgoing edge in $T^{(c,\phi),\pi^*}$.

\begin{thm} \label{th:rate}
  Under UCNC, all physical queues are rate stable for any arrival rate in the interior of the capacity region. \Ie
  \begin{eqnarray*}
    \lim_{t \rightarrow \infty}\frac{Q_{uv}(t)}{t} &=& 0, ~~\text{w.p.}~1, ~\forall (u,v) \in \EE; \\
    \lim_{t \rightarrow \infty}\frac{Q_{u}(t)}{t} &=& 0, ~~\text{w.p.}~1, ~\forall u \in \VV.
  \end{eqnarray*}
\end{thm}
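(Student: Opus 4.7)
The plan is to combine Theorem \ref{th:strong} with the rate stability properties of the ENTO scheduling policy from \cite{umw_info17,nto}. The key observation is that UCNC commits each arriving packet to a fixed route $T^{(c,\phi),\pi^*}$ at arrival, and this same route governs transmission and processing in both the virtual and the physical systems. Consequently, every virtual arrival corresponds to exactly the same amount of eventual physical work on the associated link or computation queue, with only a temporal offset: virtual arrivals happen instantaneously, while physical work accumulates as the packet progresses through its route. So rate stability in the physical system should follow once one bounds the long-run physical offered load by the service rate at each queue.

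The first step is to formalize this correspondence. The quantities $A_{uv}^{(c,\phi),\pi^*}(t)$ and $A_{u}^{(c,\phi),\pi^*}(t)$ defined in \eqref{eq:scaling1}--\eqref{eq:scaling2} already fold flow scaling and multicast duplication into the coefficients $w^{(\phi,i)}$ and $x^{(\phi,i)}$; therefore the cumulative physical work that must cross link $(u,v)$ over $[0,T]$ equals $\sum_{t=0}^{T-1}\sum_{(c,\phi)} A_{uv}^{(c,\phi),\pi^*}(t)$, and similarly for each computation queue. The second step is to invoke Theorem \ref{th:strong}: strong stability of $\tilde Q_{uv}$ combined with the Lindley recursion yields
\[
\limsup_{T\to\infty}\frac{1}{T}\sum_{t=0}^{T-1}\sum_{(c,\phi)} \E[A_{uv}^{(c,\phi),\pi^*}(t)] \leq \mu_{uv},
\]
and analogously $\limsup_T \tfrac{1}{T}\sum_t \E[\sum_{(c,\phi)} A_u^{(c,\phi),\pi^*}(t)] \leq \mu_u$ for each node. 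Since exogenous arrivals are i.i.d.\ and routing decisions depend only on bounded queue states, a strong law of large numbers argument promotes these expectation bounds to almost sure convergence of the time-averaged physical offered load to a value at most $\mu_{uv}$ (respectively $\mu_u$).

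The third step is to apply the ENTO rate stability result, which asserts that whenever the long-run offered load at each queue does not exceed its service rate, then $Q_{uv}(t)/t\to 0$ and $Q_u(t)/t\to 0$ almost surely. I expect the main obstacle to be extending the standard ENTO analysis to simultaneously handle flow scaling at computation nodes and packet duplication at multicast branch points. My approach is to adopt the convention that every scaled offspring packet and every duplicated copy inherits the hop count of its parent, so that hop counts and hence ENTO priorities remain well defined along every branch of the Steiner tree $T^{(c,\phi),\pi^*}$. Under this convention, the ENTO argument of \cite{umw_info17} for multicast traffic extends directly, with the scaling factors $w^{(\phi,i)}$ and $x^{(\phi,i)}$ appearing only as deterministic multipliers that have already been absorbed into the offered-load bounds from step two. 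Combining the three steps yields the claimed rate stability of all physical queues.
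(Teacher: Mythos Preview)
Your overall architecture matches the paper's proof: (i) identify virtual and physical workloads, (ii) use Theorem~\ref{th:strong} to bound the long-run offered load at each queue by its service rate, (iii) invoke ENTO to convert this into rate stability of the physical queues. Two points deserve attention.

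\textbf{Step 2.} The SLLN justification is shaky: the virtual-queue arrivals $A_{uv}^{\pi^*}(t)$ are not i.i.d., since routing depends on the virtual queue state, and that state is not a priori bounded. The paper avoids this by observing that strong stability of $\tilde Q_e$ implies rate stability, i.e.\ $\tilde Q_e(t)/t\to 0$ almost surely (Lemma~1 of \cite{umw_info17}). From the Lindley recursion one then gets, for every sample path and every $t_0\le t$, the deterministic inequality $A_e(t_0,t)\le S_e(t_0,t)+F_e(t)$ with $F_e(t)=o(t)$. This sample-path bound is what the ENTO induction actually consumes; the expectation bound you wrote is not enough by itself.

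\textbf{Step 3.} The assertion that the ENTO argument ``extends directly'' with scaling factors ``already absorbed into the offered-load bounds'' is the real gap. The offered-load bound from step~2 controls the contribution of \emph{new} packets (those arriving in $[t_0,t)$) to link $e$. The ENTO induction on hop count $k$ must also bound the contribution of \emph{old} packets already in the network at time $t_0$: with flow scaling, a single old hop-$j$ packet ($j<k$) can spawn several hop-$k$ packets at $e$, not just one. The paper handles this by introducing the worst-case aggregate scaling factor
\[
\gamma=\max_{\phi,\,0\le i\le j\le M_\phi}\frac{\max(w^{(\phi,j)},x^{(\phi,j)})}{\min(w^{(\phi,i)},x^{(\phi,i)})},
\]
and carrying factors of $\gamma$ and $\gamma k$ through the inductive bound on $Q_e^k(t)$. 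This is a genuine modification of the argument in \cite{umw_info17}, not a cosmetic one; without it the induction step does not close. Your hop-count inheritance convention is correct and is what the paper uses, but it addresses only priority ordering, not the packet-multiplication issue in the counting.
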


The proof can be found in Appendix \ref{sc:ap_phy} and consists of two parts. The first part is to prove that the average arrival rate is no more than the service rate of every link and every computation node. The second part is to prove that under this condition, the physical queues are stable under the ENTO policy. Using standard queue stability analysis (\eg \cite{umw_info17}), we conclude that the policy is throughput-optimal.

\section{Simulation results}
\label{sc:simulation}
In this section, we evaluate the performance of UCNC in a distributed computing network based on the Abilene network topology in Fig. \ref{fig:us}.
For simplicity, we assume that each link is bidirectional and has unit transmission capacity in each direction. We evaluate the performance of UCNC for unicast traffic in Section \ref{sc:unicast}, and for multicast traffic in Section \ref{sc:multicast}. 
In Sections \ref{sc:unicast} and \ref{sc:multicast}, we consider a small number of commodities, and assume that nodes 3 and 8 have unit computation capacity and that all the other nodes have zero computation capacity. In Section \ref{sc:large}, we consider a larger number of commodities with a mix of unicast and multicast.

For unicast traffic, we compare UCNC with the backpressure-based algorithm in \cite{dcnc_icc16}. While both algorithms are throughput-optimal, UCNC yields much shorter packet delay. We also compare UCNC with heuristic policies such as choosing the closest server to process the service functions, and observe that the heuristic policies are not always throughput-optimal. This demonstrates the importance of joint optimization of communication and computation resources.

For multicast traffic, we illustrate the performance of UCNC, and compare the capacity region under multicast traffic with the capacity region when multicast flows are treated as multiple unicast flows. Numerical results indicate
the ability to deliver higher rates when multicast traffic can be served via proper packet duplications, as opposed to creating independent copies for each destination.
This confirms the importance of the first throughput-optimal algorithm for multicast traffic in distributed computing networks.

We compare different policies using the average delay metric. Note that we did not claim any theoretical delay guarantee of UCNC (other than $o(t)$ delay with probability 1 due to Little's law and Theorem \ref{th:rate}). Nevertheless, the delay metric is important for quality of service. Moreover, queue lengths can be inferred from delay information. Small delays indicate short queue lengths and therefore stable queues. Thus, the arrival rates supported by different policies can be inferred using delay information.

\subsection{Unicast traffic} \label{sc:unicast}
\subsubsection{Comparison with backpressure-based algorithm}
We consider two commodities of unicast traffic. The first commodity originates at node 1 and is destined for node 11. The second commodity originates at node 4 and is destined for node 7. Packets in both commodities are processed by two functions in a service chain. Let $\lambda_1$ and $\lambda_2$ denote the expected arrival rates of the two commodities, respectively. Ignoring all the scalings ($\xi = r = 1$), the computation resource constraints are tight 
to support $\lambda_1 + \lambda_2 = 1$. Thus, the capacity region is $\lambda_1 + \lambda_2 \leq 1$. Figure \ref{fig:delay} compares the average packet delays under UCNC and the backpressure-based algorithm, for different arrival rates that satisfy $\lambda_1 = \lambda_2$. We observe that the average packet delays under UCNC are significantly lower than the delays under the backpressure-based algorithm.

\subsubsection{Comparison with nearest-to-destination service function placement}
We compare the performance of UCNC with the heuristic of placing the service functions in the computation node that is nearest to the destination. For a fair comparison, the processing capacity of a single node should be sufficient. We consider a single unicast commodity from node 2 to node 7. The service chain $\phi$ has a single function $(\phi, 1)$ with flow scaling factor $\xi^{(\phi,1)} = 1/3$ and computation requirement $r^{(\phi,1)} = 1/3$. The heuristic policy routes the packets from node 2 to node 8, which is the closest computation node to node 7, processes the packets at node 8, and routes the processed packets from node 8 to node 7. The average packet delays under both algorithms are compared in Fig. \ref{fig:heu1}. Due to communication constraints, the maximum rate that UCNC can support is $\lambda = 3$, while the maximum rate that the heuristic policy can support is $\lambda = 2$. The heuristic policy fails to be throughput-optimal when there is flow scaling (shrinkage) due to processing. This demonstrates the importance of jointly optimizing communication and computation resources.

\subsubsection{Comparison with nearest-to-source service function placement}
Placing a service function at the nearest-to-source computation node may decrease the supportable service rate, when there is flow expansion. We consider a single commodity from node 2 to node 7. The service chain $\phi$ has a single function $(\phi, 1)$ with flow scaling factor $\xi^{(\phi,1)} = 3$ and computation requirement $r^{(\phi,1)} = 1$. The heuristic policy routes the packets from node 2 to node 3, which is the closest computation node to the source, processes the packets at node 3, and then routes the processed packets from node 3 to node 7. The maximum flow rate from node 3 to node 7 is two. Thus, the maximum supportable service rate is $\lambda = 2/3$, which expands to a flow of rate two after processing. In contrast, illustrated in Fig. \ref{fig:heu2}, UCNC is able to support a service rate $\lambda = 1$. This, again, demonstrates the need to jointly optimize communication and computation resources.

\begin{figure}
\centering
\includegraphics[width=.7\linewidth]{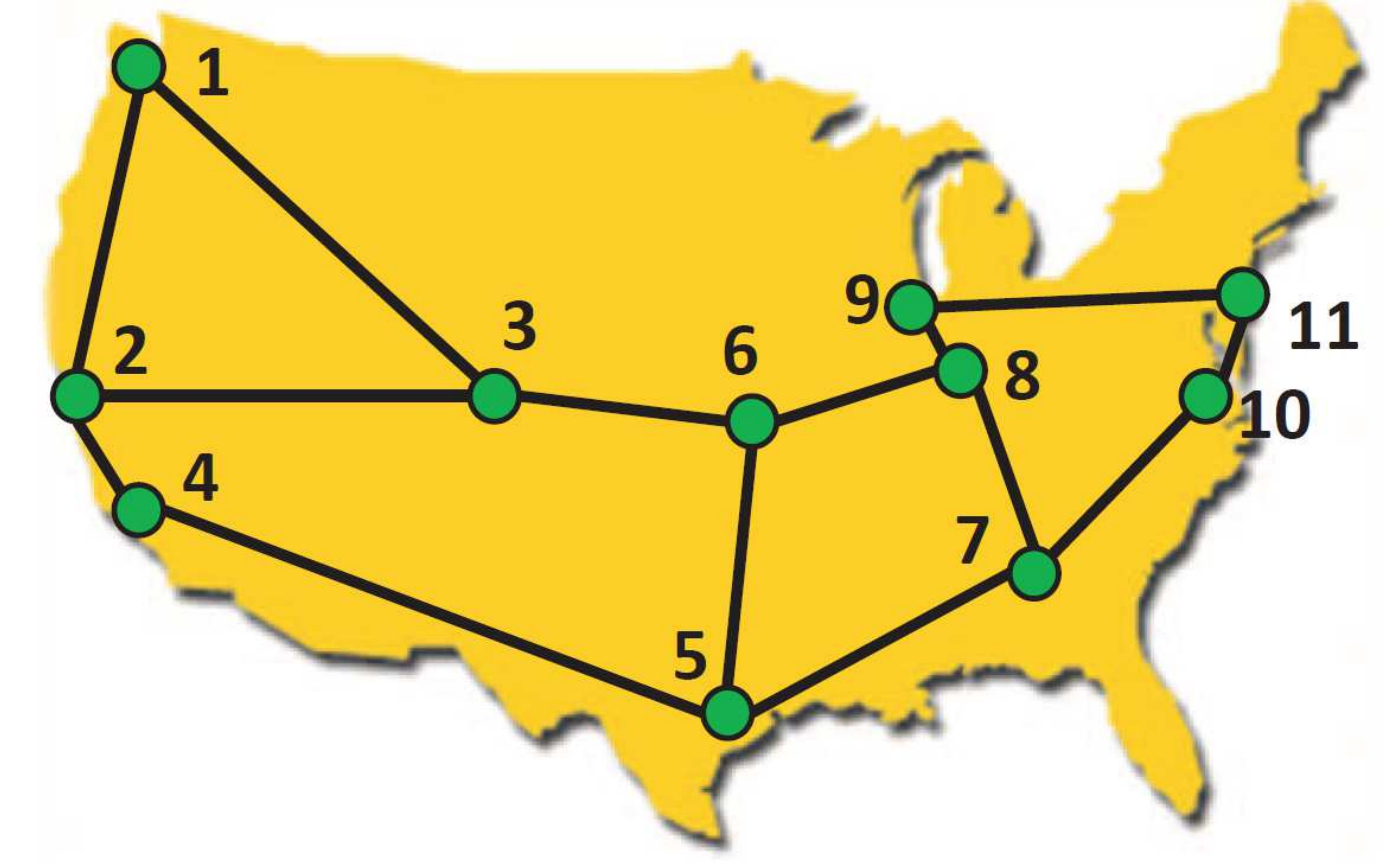}
\caption{Abilene network topology.}
\label{fig:us}
\end{figure}

\begin{figure*}
\centering
\subfigure[]{
\includegraphics[width=.32\linewidth]{./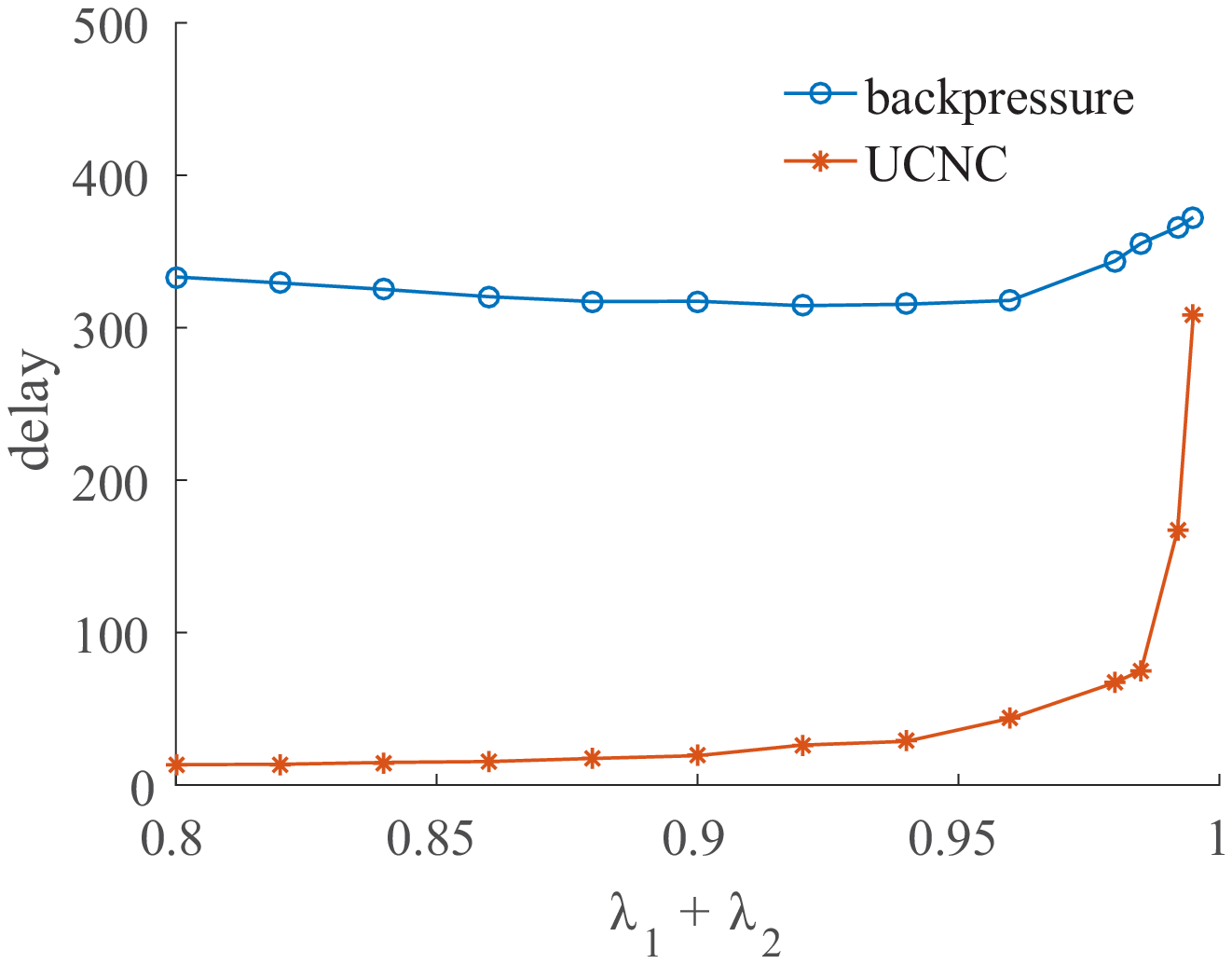}
\label{fig:delay}
\hspace{-.02\linewidth}
}
\subfigure[]{
\includegraphics[width=.32\linewidth]{./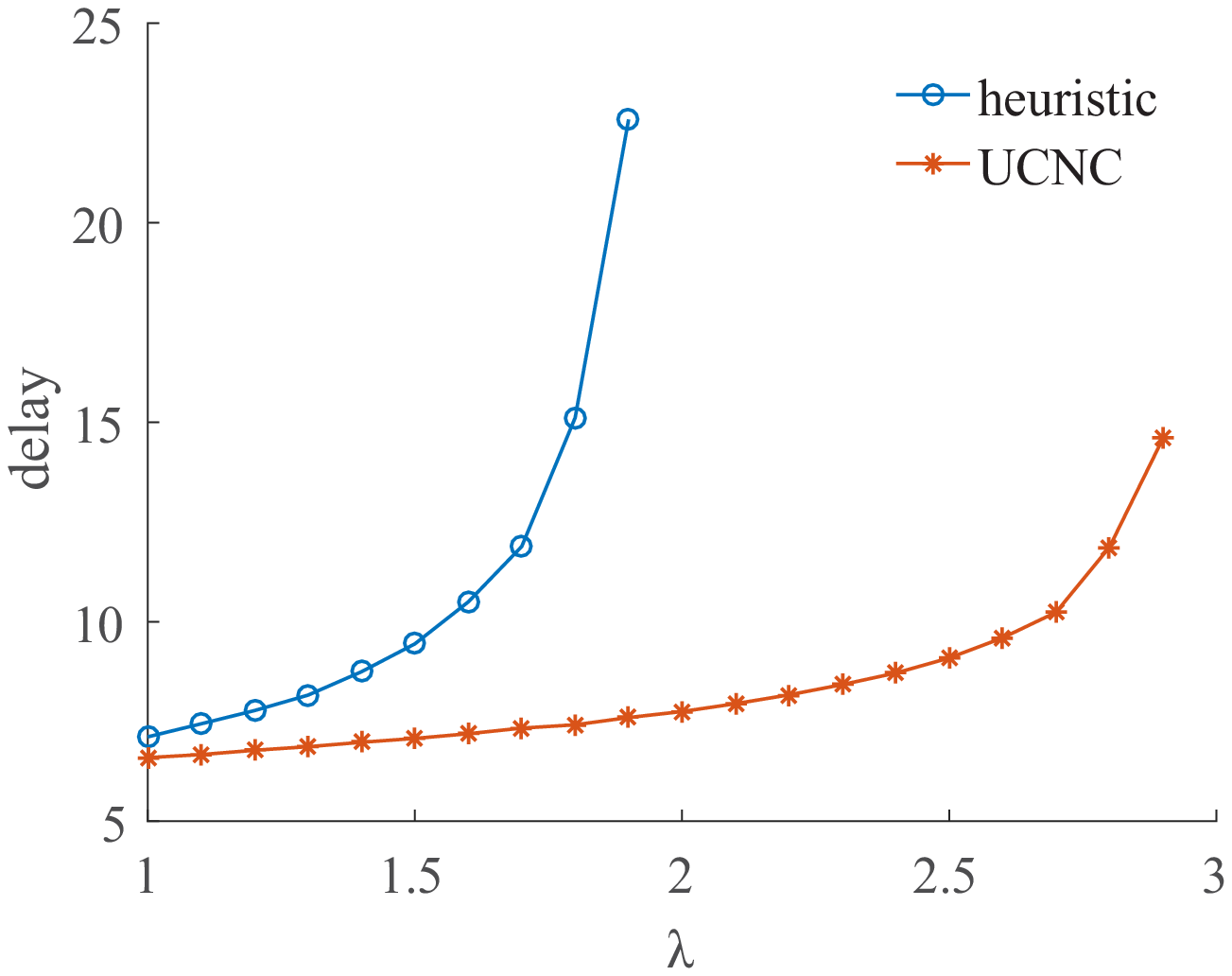}
\label{fig:heu1}
\hspace{-.02\linewidth}
}
\subfigure[]{
\includegraphics[width=.32\linewidth]{./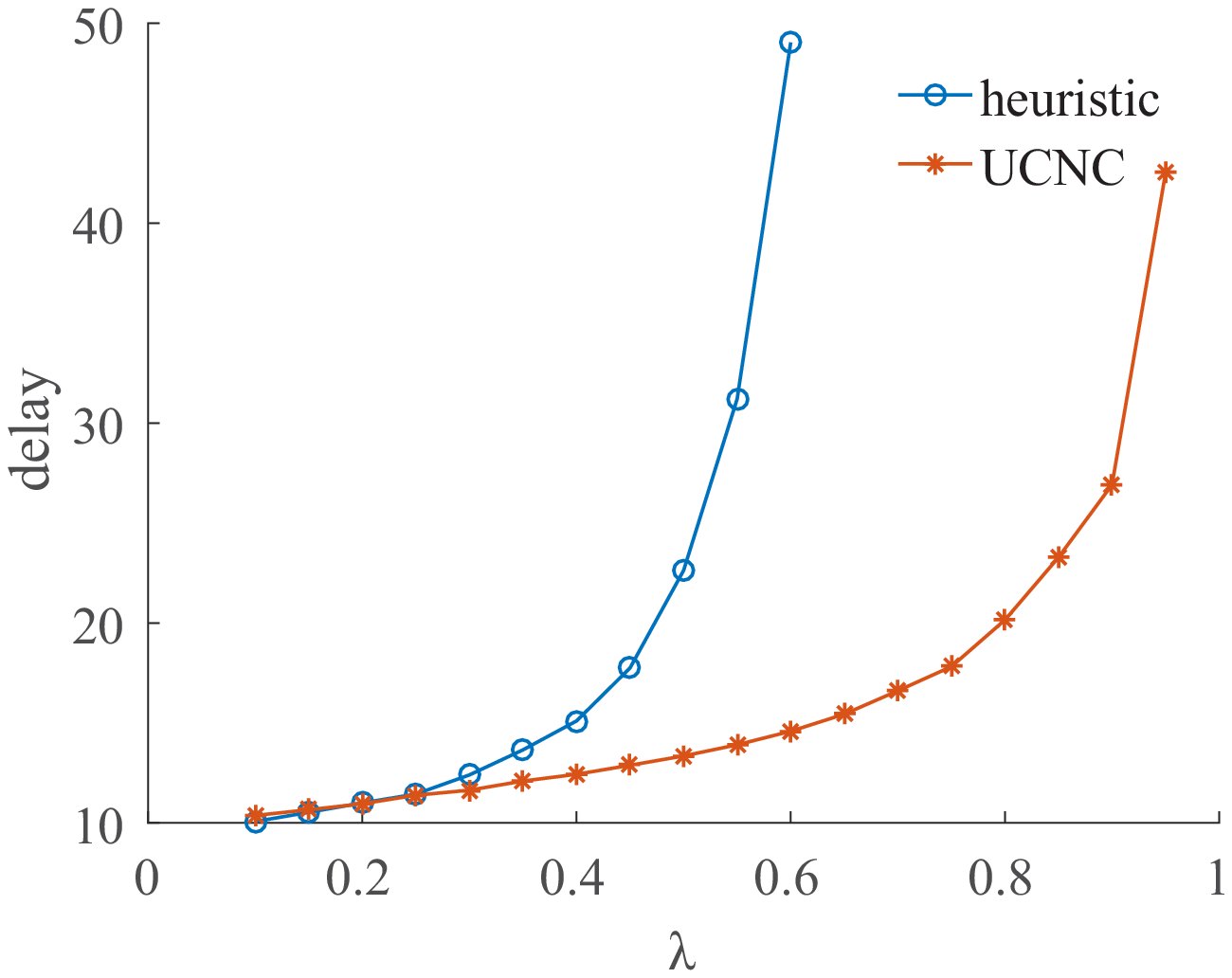}
\label{fig:heu2}
}
\caption{Average packet delay performance: (a) UCNC v.s. backpressure-based algorithm; (b) UCNC v.s. nearest-to-destination function placement heuristic; (c) UCNC v.s. nearest-to-source function placement heuristic.}
\end{figure*}

%
%

\subsection{Multicast traffic} \label{sc:multicast}
We next study a multicast flow from node 1 to nodes 7 and 11. Suppose that the service chain has two functions and that all the scaling factors $\xi, r$ are one. The optimal policy is to process the packets at both nodes 3 and 8, and then duplicate the processed packets and route them to the two destinations. The maximum supportable service rate is $\lambda = 1$ for both destinations. In contrast, if the multicast flow is treated as two unicast flows, then the sum of the service rates to both destinations is one. Thus, multicasting improves the performance of the distributed computing network. As shown in Fig. \ref{fig:mixed}, UCNC is throughput-optimal for multicast traffic, and the average packet delays are small.

\begin{figure}
\centering
\includegraphics[width=.7\linewidth]{./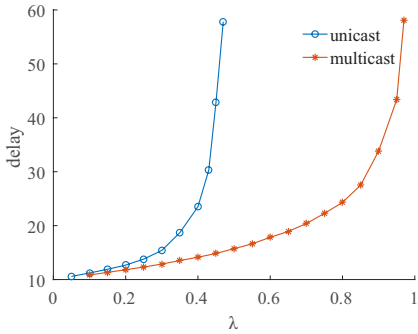}
\caption{Average packet delay of multicast traffic and when multicast is treated as multiple unicast traffic.}
\label{fig:mixed}
\end{figure}

\subsection{Mixed-cast traffic} \label{sc:large}
We evaluate the performance of UCNC under a large number of commodities. We consider three service chains $\Phi = \{\phi_1, \phi_2, \phi_3\}$. Services $\phi_1, \phi_2$ have two functions each, and $\phi_3$ has three functions. The scaling factors $\xi, r$ are chosen independently from a uniform distribution in $[0.5,2]$. Each service chain processes four unicast flows and two multicast flows, where the source and the destination(s) of each flow are randomly chosen among all nodes that are at least two hops away. Thus, there are a total of 18 commodities. Each function can be computed at four randomly chosen computation nodes, each of which has unit capacity.

The average packet delays under the $18$ mixed-cast commodities are shown in Fig. \ref{fig:large}, where all commodities have identical arrival rate $\lambda$. We observe that UCNC is able to support rate $\lambda = 0.12$. In contrast, when each multicast flow is treated as multiple unicast flows, for a total of 24 commodities, the maximum supportable rate is around $\lambda = 0.09$. This demonstrates the importance of optimal control for multicast traffic.
The average packet delays under the backpressure-based algorithm, with multicast flows treated as multiple unicast flows, are over 1000 for $\lambda \in [0.01, 0.09]$, substantially higher than the delays under UCNC, and hence ommitted in the figure.

\begin{figure}
\centering
\includegraphics[width=.7\linewidth]{./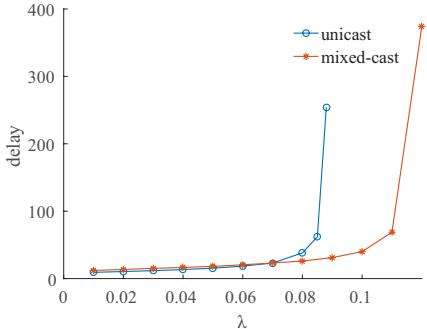}
\caption{Average packet delay of mixed-cast traffic and when multicast is treated as multiple unicast traffic.}
\label{fig:large}
\end{figure}

Finally, we also evaluated the performance of an algorithm that uses the routing policy $\pi^*$ and the First-In-First-Out (FIFO) scheduling policy for the physical queues. Numerical results demonstrate that the average packet delays are close to the delays under the ENTO scheduling policy, and are omitted for brevity. Thus, for practical purpose of dynamic control in distributed computing networks, FIFO scheduling policy could also be used.

\section{Extensions}
\label{sc:extensions}
\subsection{Undirected network}
In an undirected network where the sum of transmission rates in both directions over a link is limited by the link capacity, the virtual queue length updates should be modified while the other steps in UCNC remain the same. Notice that the capacity region of an undirected network is given by the flow decomposition constraints \eqref{eq:decomp}, the computation constraints \eqref{eq:comp}, and the following communication constraints.
$$\sum_{(k,i,c,\phi) \in \SSS'_{uv}} \hspace{-2mm} w^{(\phi,i)}\lambda_k^{(c,\phi)} \leq \mu_{uv}, ~~ \forall (u,v) \in \EE,$$
where $\SSS'_{uv} = \{(k,i,c,\phi): (u^{(\phi,i)},v^{(\phi,i)}) \in T_k^{(c,\phi)} \text{ or } (v^{(\phi,i)},u^{(\phi,i)}) \in T_k^{(c,\phi)}, T_k^{(c,\phi)} \in \TT^{(c,\phi)}, i \in \{0,\dots,M_\phi\}, (c,\phi) \in (\CC, \Phi)\}$ accounts for transmission in both directions of the link.

Each undirected link $(u,v)$ is associated with a virtual queue. A packet contributes load to the queue if it will travel either from $u$ to $v$, or from $v$ to $u$. Eq. \eqref{eq:scaling1} is updated by
\begin{eqnarray*}
  A_{uv}^{(c, \phi),\pi}(t) &=& \sum_{(u^{(\phi, i)}, v^{(\phi, i)}) \in T^{(c, \phi),\pi}} w^{(\phi, i)} A^{(c, \phi)}(t) \\
   &+& \sum_{(v^{(\phi, i)}, u^{(\phi, i)}) \in T^{(c, \phi),\pi}} w^{(\phi, i)} A^{(c, \phi)}(t).
\end{eqnarray*}

With this modified queue evolution, under the routing policy $\pi^*$, which routes a unicast packet over a shortest path and routes a multicast packet over a minimum Steiner tree, all the virtual queues are strongly stable for any arrival rate in the interior of the capacity region. This can be proved in the same approach as the proof for Theorem \ref{th:strong}.

Thus, the sum of the average packet arrival rates to a link in both directions is no more than the transmission capacity of the link. ENTO scheduling policy still guarantees the stability of physical queues when the link is undirected. Thus, the same routing and scheduling policy, with modified queue evolutions, is throughput-optimal.

\subsection{Network throughput under approximate routing}
The computation for the minimum Steiner tree requires exponential time. To reduce the computation overhead, approximation algorithms can be implemented to compute a near-optimal Steiner tree. We study the performance of UCNC under sub-optimal routing.

Consider a routing policy $\pi^{\alpha}$ that computes a Steiner tree for a multicast packet whose cost is at most $\alpha > 1$ times the minimum cost. The routing policy $\pi^{\alpha}$ and the ENTO scheduling policy are able to support arrival rate vector $\boldsymbol{\lambda} / \alpha$, where $\boldsymbol{\lambda}$ is any vector in the interior of the stability region.

The proof follows a similar approach, by comparing the Lyapunov drift under $\pi^{\alpha}$ with the (scaled) drift under a randomized policy that supports arrival rate vector $\boldsymbol{\lambda} / \alpha$.

\subsection{Broadcast and anycast traffic}
Broadcast traffic is a special case of multicast traffic, where the destination nodes of a commodity include all the nodes in $\VV$. At each time $t$, for a commodity-$(c,\phi)$ packet, the routing policy $\pi^*$ computes a minimum Steiner tree that is rooted at $s_c^{(\phi,0)}$ and connected to $\VV^{(\phi, M_\phi)}$.

For anycast traffic, where a commodity-$(c,\phi)$ packet is originated at $s_c$ and destined for any node in $\DD_c$, a dummy node $d_c^{(\phi, M_\phi)}$ is added in $\GG^{(\phi, M_\phi)}$. Links of zero cost are added from $\DD_c^{(\phi, M_\phi)}$ to $d_c^{(\phi, M_\phi)}$. The routing policy $\pi^*$ computes a shortest path from $s_c^{(\phi,0)}$ to $d_c^{(\phi, M_\phi)}$.

Under the new routing policies and the same ENTO scheduling policy, UCNC is throughput-optimal for broadcast and anycast traffic.

\subsection{Location-dependent computation requirements}
UCNC can be extended to handle the problem where a service function $(\phi, i)$ may have different computation resource requirements at different computation nodes. This is motivated by the availability of dedicated servers to process certain functions. The processing of such functions require smaller amount of computation resources at dedicated servers compared with generic servers.

For route selection, the cost of an edge $(u^{(\phi, i-1)}, u^{(\phi, i)})$ at time $t$ is modified to $x_u^{(\phi,i)} \tilde Q_u(t)$, where $\tilde Q_u(t)$ is the virtual queue length of $u$ and $x_u^{(\phi,i)} = r_u^{(\phi,i)} \prod_{j = 1}^{i - 1} \xi^{(\phi,j)}$. The $r_u^{(\phi,i)}$ denotes the computation resource requirement to process each unit of input flow by function $(\phi,i)$ at node $u$, and may depend on $u$.


\section{Conclusion}
\label{sc:conclusion}
We characterized the capacity region and developed the first throughput-optimal control policy (UCNC) for unicast and multicast traffic in a distributed computing network. UCNC handles both communication and computation constraints, flow scaling through service function chains, and packet duplications. Simulation results suggest that UCNC has superior performance compared with existing algorithms.

\section{Appendix}
\label{sc:ap}
\subsection{Restricted routes do not reduce the capacity region}\label{sc:ap_restrict}
\emph{Proof of Lemma \ref{th:efficient}:}
We prove that, any packet that can be transmitted from the source to the destination(s) by time $t$ under a policy $\pi$ that uses arbitrary routes, can also be transmitted from the source to the destination(s) by time $t$ under a policy $\pi'$ that only uses efficient routes. Then, by Eq.~(\ref{eq:support}), any rate $\boldsymbol{\lambda}$ that is supported by $\pi$ can also be supported by $\pi'$. By Eq.~(\ref{eq:region}), any rate in the capacity region can be supported by a policy that only uses efficient routes.

Consider a policy $\pi$ that transmits the same packet to a node in $\GG^{(\phi)}$ more than once. For unicast traffic, where there is no packet duplication, the packet travels through one or more cycles. Moreover, each cycle must be in one layer of $\GG^{(\phi)}$ and the packet can not be processed while traveling through the cycle, since there is no edge from $\GG^{(\phi,j)}$ to $\GG^{(\phi,i)}$ for $i < j$. Construct a policy $\pi'$ by removing all the cycles and transmission schedules 
on the cycle links. Any packet that arrives at a node (\eg the destination) by time $t$ under $\pi$ can also arrive at the same node by time $t$ under $\pi'$.

For multicast traffic, if the same packet visits the same node $u^{(\phi,i)} \in \GG^{(\phi)}$ more than once under policy $\pi$, then there are two possibilities.
\begin{enumerate}
  \item The packet travels through one or more cycles in $\GG^{(\phi,i)}$.
  \item A packet is duplicated at some node $v$, and more than one copy has traveled through some links and reached $u^{(\phi,i)}$.
\end{enumerate}
To construct a policy $\pi'$ that only uses efficient routes, we handle the first case in the same manner as the unicast case. {\it I.e.,} remove all the cycles and the transmission schedules of the packet on the cycles. For the second case, policy $\pi'$ only keeps the routing and scheduling of the packet that first arrives at $u^{(\phi,i)}$, and removes all the duplications that arrive later. If the packet needs to be transmitted through more than one outgoing link from $u^{(\phi,i)}$ under $\pi$, then duplications occur at $u^{(\phi,i)}$ and the duplicated copies follow the same routes and schedules as those under $\pi$.

It is easy to check that the time that a packet visits a node under $\pi'$ is the first time that the packet visits the node under $\pi$. By repeating the process until no packet visits the same node more than once, the policy $\pi'$ only uses efficient routes.

\emph{Remark: }If all scaling factors $w, x$ are one, then an efficient route for a unicast packet is a path from the source to the destination in the layered graph. An efficient route for a multicast packet is a Steiner tree from the source to the destinations in the layered graph.

\emph{Proof of Theorem \ref{th:restrict}:}
If $w^{(\phi,i)} = x^{(\phi,i)} = 1$, $\forall \phi \in \Phi, i \in \{1,\dots,M_\phi\}$, the theorem follows immediately from Lemma \ref{th:efficient}. Next we study arbitrary (rational) scaling factors. We divide a packet into \emph{micro packets}, and represent the routes of a packet by the composition of paths (or Steiner trees) of micro packets.\footnote{The split into micro packets is mostly useful for the analysis of multicast flow and is not necessary for the analysis of unicast flow. However, we choose to use the decomposition for both unicast and multicast flows, for a unified treatment of the two cases.}

\emph{Unicast traffic: }
Consider a policy $\pi'$ that chooses an efficient route $E_p$ for a unicast packet of commodity $(c,\phi)$. We assume that a rational number of packets are routed in each link. A micro packet is designed such that it changes size as it travels through the layered graph, and is never split. Each arriving packet is split into $z$ micro packets. Due to scaling, the size of a micro packet on a link in $\GG^{(\phi,i)}$ is $w^{(\phi,i)}/z$. The size of a micro packet on a link that connects $\GG^{(\phi,i - 1)}$ and $\GG^{(\phi,i)}$ is $x^{(\phi,i)}/z$. The choice of $z$ satisfies the following two constraints.
\begin{enumerate}
\item All the links in $E_p$ carry an integer number of micro packets.
\item Every packet is divided to an integer number of micro packets.\footnote{The second constraint is necessary only for multicast flows.}
\end{enumerate}
Fig.~\ref{fig:micro} illustrates the split into micro packets.

\begin{figure}
\centering
\includegraphics[width=0.6\linewidth]{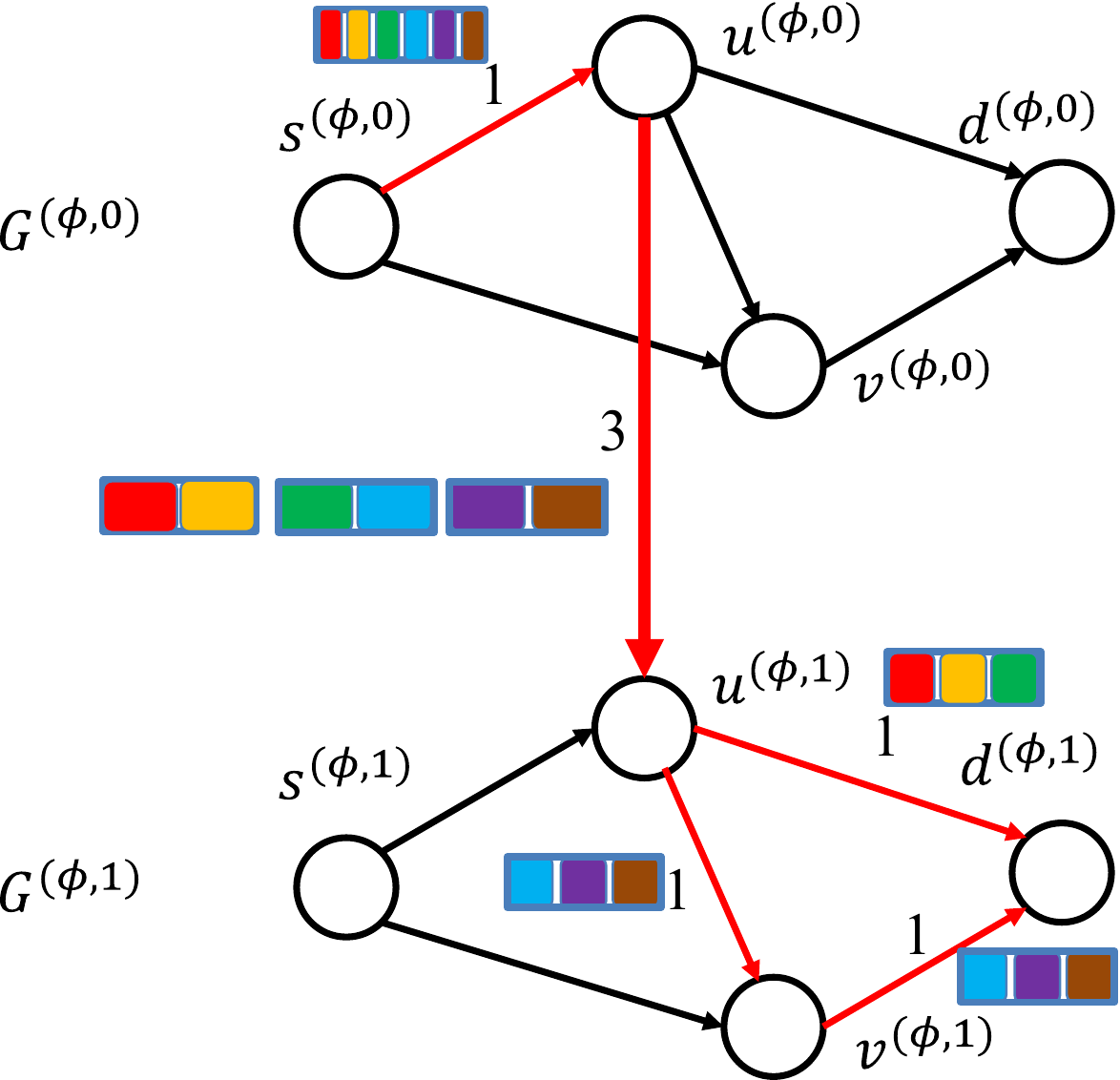}
\caption{Micro packets in an efficient route. The sizes of a micro packet on a link in $\GG^{(\phi,0)}$, link $(u^{(\phi,0)}, u^{(\phi,1)})$, and a link in $\GG^{(\phi,1)}$ are $1/6$, 1/2, and 1/3, respectively. Scaling factors: $x^{(\phi,1)} = 3$; $w^{(\phi,1)} = 2$.}
\label{fig:micro}
\end{figure}

Due to the generalized flow conservation law for unicast traffic and the choice of micro packet sizes, the total number of incoming micro packets to a node equals the total number of outgoing micro packets from a node. In other words, every outgoing micro packet can be associated with an incoming micro packet, and they can be viewed to have the same \emph{identity}. The links that carry micro packets with the same identity form a path. Since the routing is efficient, the path is acyclic. The route $E_p$ can be viewed as a \emph{convex combination} of service chain paths. More precisely, let $p_k$ be the number of micro packets with different identities that travel through a path $T_k^{(c,\phi)}$. Let $p_k/z$ be the weight of $T_k^{(c,\phi)}$. Let $\TT^{(c,\phi)}$ denote the set of paths from $s_c^{(\phi,0)}$ to $d_c^{(\phi,M_\phi)}$. The number of packets that travel through a link $(u^{(\phi,i)}, v^{(\phi,i)})$ is
$$\sum_{T_k^{(c,\phi)} \in \TT^{(c,\phi)}} \frac{p_k}{z} w^{(\phi,i)} 1\{(u^{(\phi,i)}, v^{(\phi,i)}) \in T_k^{(c,\phi)}\}.$$
The number of packets that travel through a link $(u^{(\phi,i-1)}, u^{(\phi,i)})$ is
$$\sum_{T_k^{(c,\phi)} \in \TT^{(c,\phi)}} \frac{p_k}{z} x^{(\phi,i)} 1\{(u^{(\phi,i-1)}, u^{(\phi,i)}) \in T_k^{(c,\phi)}\}.$$ 

\emph{Remark: }Two micro packets with different identities are distinct (\ie they carry different information). Micro packets that have the same identity can be viewed to carry the same (raw) information. More specifically, we view that a function $(\phi,i)$ takes a micro packet as an input and then outputs a micro packet with the same identity. 
Equivalently, in $\GG^{(\phi)}$, when a micro packet travels through a link in $\GG^{(\phi,i-1)}$, a link that connects $\GG^{(\phi,i-1)}$ and $\GG^{(\phi,i)}$, and a link in $\GG^{(\phi,i)}$, it has the same identity with possibly different sizes on the three links. In Fig.~\ref{fig:micro}, each color represents an identity.

\emph{Multicast traffic: }
The sizes of a micro packet are determined in the same manner as in the unicast case. First, consider a node where there is no packet duplication. The total number of incoming micro packets equals the total number of outgoing micro packets, and every outgoing micro packet can be associated with an incoming micro packet with the same identity. Then, consider a node where some packets $P$ are duplicated. All the micro packets that form $P$ are duplicated as well. The number of duplicated micro packets is integral. A duplicated micro packet inherits the same identity as the original micro packet. Every micro packet follows the same route as the packet that contains the micro packet. Since each link carries an integer number of micro packets and the number of duplicated packets is integral at each node, a micro packet is never split, and there exists an assignment of identity to micro packets that interprets the generalized flow conservation and packet duplication law. Due to the efficient routing assumption, the micro packets that have the same identity never visit the same node in $\GG^{(\phi)}$ more than once (\ie, the route forms a directed graph where the in-degree of any node is one). Thus, the route of the micro packets that have the same identity form a service chain Steiner tree (arborescence). The multicast flow can be viewed as a convex combination of service chain Steiner trees.

\subsection{Stability of the virtual queues} \label{sc:ap_virtual}
\emph{Proof of Theorem \ref{th:strong}}:
We consider a quadratic Lyapunov function $L(\tilde Q(t)) = \sum_{(u,v) \in \EE}\tilde{Q}^2_{uv}(t) + \sum_{u \in \VV}\tilde{Q}^2_u(t)$. The following inequality holds for all $\tilde Q$.
\begin{eqnarray*}
  \tilde{Q}^2(t+1) &\leq& (\tilde{Q}(t) + A^{(c,\phi),\pi}(t) - \mu)^2 \\
  & \leq & \tilde{Q}^2(t) +  (A^{(c,\phi),\pi}(t))^2 + \mu^2 \\
  &+& 2\tilde{Q}(t)A^{(c,\phi),\pi}(t) - 2\tilde{Q}(t) \mu.
\end{eqnarray*}
Let $A_{uv}^{\pi}(t) = \sum_{(c,\phi) \in (\CC,\Phi)}A_{uv}^{(c,\phi),\pi}(t)$; $A_{u}^{\pi}(t) = \sum_{(c,\phi) \in (\CC,\Phi)}A_{u}^{(c,\phi),\pi}(t)$. The Lyapupov drift $\Delta^\pi(t)$ is upper bounded by
\begin{eqnarray}
  \Delta^\pi(t) & \overset{\text{def}}{=} & \mathbb{E}(L(\tilde{Q}(t+1)) - L(\tilde{Q}(t))|\tilde{Q}(t)) \nonumber\\
  &\leq& B + 2\sum_{(u,v)\in \EE}\tilde{Q}_{uv}(t) \Big(\mathbb{E}(A_{uv}^{\pi}(t)|\tilde{Q}(t))-\mu_{uv}\Big) \nonumber \\
  &&+ 2\sum_{u \in \VV}\tilde{Q}_{u}(t) \Big(\mathbb{E}(A_{u}^{\pi}(t)|\tilde{Q}(t))-\mu_{u}\Big), \nonumber \\
\end{eqnarray}
where
\begin{eqnarray*}
  B &=& \sum_{(u,v)\in \EE} \Big( \mathbb{E} (A_{uv}^{\pi}(t))^2 + \mu_{uv}^2 \Big)\\
  && + \sum_{u \in \VV} \Big( \mathbb{E} (A_{u}^{\pi}(t))^2 + \mu_{u}^2 \Big)\\
  & \leq & \sum_{(u,v)\in \EE} \sum_{(c,\phi)\in (\CC,\Phi)} \sum_{i \in \{0,\dots,M_\phi\}}(w^{(\phi,i)})^2 \mathbb{E}(A^{(c,\phi)}(t))^2 \\
  && + \sum_{u\in \VV} \sum_{(c,\phi)\in (\CC,\Phi)} \sum_{i \in \{1,\dots,M_\phi\}}(x^{(\phi,i)})^2 \mathbb{E}(A^{(c,\phi)}(t))^2 \\
  && + \sum_{(u,v)\in \EE}\mu_{uv}^2 + \sum_{u\in \VV}\mu_{u}^2.
\end{eqnarray*}
For finite second moment of exogenous arrivals $\mathbb{E}(A^{(c,\phi)}(t))^2$ and finite scaling factors, $B$ is finite.

We next prove that the drift $\Delta^\pi(t)$ is negative for sufficiently large $\tilde Q(t)$, by comparing $\Delta^\pi(t)$ with the drift of a randomized policy.
For any $\boldsymbol{\bar \lambda}$ in the interior of the capacity region, there exists $\epsilon > 0$ such that $(1+ \epsilon) \boldsymbol{\bar \lambda}  \in \Lambda(\GG,\CC,\Phi)$. The rate $\boldsymbol{\lambda} = (1+ \epsilon) \boldsymbol{\bar \lambda} $ satisfies the constraints (\ref{eq:decomp}), (\ref{eq:comm}) and (\ref{eq:comp}). Let $\lambda_k^{(c,\phi)} = (1+\epsilon)\bar \lambda_k^{(c,\phi)}$, $\forall k,c,\phi$.
\begin{eqnarray*}
(1+\epsilon)\bar \lambda^{(c,\phi)} = \sum_{T_k^{(c,\phi)} \in \TT^{(c,\phi)}} (1+\epsilon)\bar \lambda_k^{(c,\phi)}, ~~ \forall (c,\phi) \in (\CC, \Phi),\\
  \sum_{(k,i,c,\phi) \in \SSS_{uv}} \hspace{-2mm} w^{(\phi,i)} (1+\epsilon)\bar \lambda_k^{(c,\phi)} \leq \mu_{uv}, ~~~~~~~~~~~ \forall (u,v) \in \EE, \\
  \sum_{(k,i,c,\phi) \in \SSS_u} \hspace{-2mm} x^{(\phi,i)} (1+\epsilon)\bar \lambda_k^{(c,\phi)} \leq \mu_{u}, ~~~~~~~~~~~~~~~~~ \forall u \in \VV.
\end{eqnarray*}

The randomized policy routes each incoming commodity-$(c,\phi)$ packet along $T_k^{(c,\phi)} \in \TT_k^{(c,\phi)}$ with probability $\bar \lambda_k^{(c,\phi)} / \bar \lambda^{(c,\phi)}$, $\forall k, c, \phi$. The expected arrival rates to $\tilde Q_{uv}$ and $\tilde Q_u$ at every time $t$ are
$$ \E A^{\text{rand}}_{uv}(t) =  \sum_{(k,i,c,\phi) \in \SSS_{uv}} \hspace{-2mm} w^{(\phi,i)} \bar \lambda_k^{(c,\phi)} \leq \mu_{uv} / (1 + \epsilon); $$
$$ \E A^{\text{rand}}_{u}(t) =  \sum_{(k,i,c,\phi) \in \SSS_u} \hspace{-2mm} x^{(\phi,i)} \bar \lambda_k^{(c,\phi)} \leq \mu_{u} / (1 + \epsilon). $$

Recall Eq. (\ref{eq:mintree}). Upon the arrival of $A^{(c,\phi)}(t)$ commodity-$(c,\phi)$ packets, policy $\pi^*$ chooses a route $T^{(c,\phi),\pi^*}$ that minimizes
$$\min_{\pi} \sum_{(u,v)\in \EE} \tilde{Q}_{uv}(t)A_{uv}^{(c,\phi),\pi}(t) + \sum_{u\in \VV} \tilde{Q}_{u}(t)A_{u}^{(c,\phi),\pi}(t).$$
The randomized policy randomly chooses $T_k^{(c,\phi)} \in \TT_k^{(c,\phi)}$ which has an equal or larger weight. Conditional on queue lengths $\tilde Q(t)$, taking expectation over the random variable $A^{(c,\phi)}(t)$ and the random actions in the randomized policy,
\begin{eqnarray*}
  && \sum_{(u,v)\in \EE} \tilde{Q}_{uv}(t) \E (A_{uv}^{(c,\phi),\pi^*}(t)|\tilde{Q}(t)) \\
  && + \sum_{u\in \VV} \tilde{Q}_{u}(t) \E (A_{u}^{(c,\phi),\pi^*}(t)|\tilde{Q}(t)) \\
  \leq&& \sum_{(u,v)\in \EE} \tilde{Q}_{uv}(t)\E (A_{uv}^{(c,\phi),\text{rand}}(t)|\tilde{Q}(t)) \\
  &&+ \sum_{u\in \VV} \tilde{Q}_{u}(t) \E (A_{u}^{(c,\phi),\text{rand}}(t) | \tilde{Q}(t)).
\end{eqnarray*}
Summing over all commodity packets, for each link $(u,v)$,
\begin{eqnarray*}
  \E (A_{uv}^{\pi^*}(t)|\tilde{Q}(t))&=&\sum_{(c,\phi) \in (\CC, \Phi)} \E (A_{uv}^{(c,\phi),\pi^*}(t)|\tilde{Q}(t)), \\
  \E (A_{uv}^{\text{rand}}(t)|\tilde{Q}(t)) &=& \sum_{(c,\phi) \in (\CC, \Phi)} \E  (A_{uv}^{(c,\phi),\text{rand}}(t)|\tilde{Q}(t)).
\end{eqnarray*}
Similar equalities hold for $\E (A_{u}^{\pi^*}(t)|\tilde{Q}(t))$ and $\E (A_{u}^{\text{rand}}(t)|\tilde{Q}(t))$. Therefore, we obtain
\begin{eqnarray*}
   \sum_{(u,v)\in \EE} \tilde{Q}_{uv}(t)\E (A_{uv}^{\pi^*}(t)|\tilde{Q}(t)) + \sum_{u\in \VV} \tilde{Q}_{u}(t) \E (A_{u}^{\pi^*}(t)|\tilde{Q}(t)) \\
  \leq \sum_{(u,v)\in \EE} \tilde{Q}_{uv}(t) \E (A_{uv}^{\text{rand}}(t)|\tilde{Q}(t)) + \sum_{u\in \VV} \tilde{Q}_{u}(t) \E (A_{u}^{\text{rand}}(t)|\tilde{Q}(t)).
\end{eqnarray*}

The action of the randomized policy does not depend on the queue length $\tilde{Q}(t)$. Therefore, $\E (A_{uv}^{\text{rand}}(t)|\tilde{Q}(t)) = \E A_{uv}^{\text{rand}}(t)$ and $\E (A_{u}^{\text{rand}}(t)|\tilde{Q}(t)) = \E A_{u}^{\text{rand}}(t)$.
Let $\epsilon' = \frac{\epsilon}{1+\epsilon} \min(\mu_{uv}, \mu_{u})$. The drift of policy $\pi^*$ can be upper bounded by
\begin{eqnarray*}
  \Delta^{\pi^*}(t) \leq B &+& 2\sum_{(u,v)\in \EE}\tilde{Q}_{uv}(t) \Big(\mathbb{E}(A_{uv}^{\pi^*}(t)|\tilde{Q}(t)) - \mu_{uv}\Big) \nonumber \\
  &+& 2\sum_{u \in \VV}\tilde{Q}_{u}(t) \Big(\mathbb{E}(A_{u}^{\pi^*}(t)|\tilde{Q}(t)) - \mu_{u}\Big) \nonumber \\
  \leq B &+& 2\sum_{(u,v) \in \EE} \tilde{Q}_{uv}(t)\Big(\E A_{uv}^{\text{rand}}(t) - \mu_{uv}\Big)\\
  ~~~& +& 2\sum_{u \in \VV} \tilde{Q}_{u}(t)\Big(\E A_{u}^{\text{rand}}(t) - \mu_{u}\Big)\\
  \leq B &-& 2\epsilon' \Big(\sum_{(u,v)\in \EE}\tilde{Q}_{uv}(t) + \sum_{u \in \VV}\tilde{Q}_{u}(t) \Big).
\end{eqnarray*}

Taking expectation over the virtual queue lengths $\tilde Q(t)$,
\begin{eqnarray}\label{eq:strong}
  \hspace{-5mm}\E L(\tilde Q(t+1)) &-& \E L(\tilde Q(t)) \leq B \nonumber\\
  &-& 2\epsilon' \Big(\sum_{(u,v)\in \EE} \E \tilde{Q}_{uv}(t) + \sum_{u \in \VV} \E \tilde{Q}_{u}(t)\Big).
\end{eqnarray}

Summing Eq.~(\ref{eq:strong}) from $t = 0,\dots,T-1$, and noting that $L(\tilde Q(T)) \geq 0$, $L(\tilde Q(0)) = 0$, we obtain
$$ \frac{1}{T} \sum_{t=0}^{T-1} \Big(\sum_{(u,v) \in \EE} \E \tilde{Q}_{uv}(t) + \sum_{u \in \VV} \E \tilde{Q}_{u}(t) \Big) \leq \frac{B}{2\epsilon'}. $$

By taking limsup on both sides, we have proved that all the virtual queues are strongly stable.

\subsection{Stability of the physical queues} \label{sc:ap_phy}
Before the proof, we first discuss the the intuitions on what makes a queue unstable and why the extended nearest-to-origin (ENTO) scheduling policy stabilizes the queue. Consider external packets arriving at a network, each of which has a specified path to travel. If the rate of external arrivals that will use link $e$ is no more than the service rate of $e$, the only cause of instability of the queue at $e$ is the variation of packet delays before reaching $e$. The packets may take different paths and experience different queueing delays. Within some time period, the actual arrival rate to $e$ can be higher than the service rate of $e$. The rate increase can be viewed as the contribution from the old packets in other queues (in contrast with the fresh packets that just arrived). The ENTO policy gives a higher priority to a packet that has traveled a smaller number of hops. Thus, few packets that have traveled a small number of hops are queued. These packets do not contribute much to the actual arrival rate to a subsequent queue. Thus, few packets that have traveled a slightly more number of hops are queued, because the only old packets that have higher priorities are those packets that have traveled a smaller number of hops. By induction, not many packets are in each queue, regardless of the number of hops that they have traveled, and thus the queues are stable.

In the following, we first show that, within any time interval, the packets that arrive at the network do not contribute to a physical link $e$ much more traffic than what can be transmitted through $e$. Then we prove that ENTO stabilizes the queue. The first part of proof is identical to \cite{umw_info17}. The second part of proof is similar, but takes care of flow scaling and cyclic routes.

\subsubsection{Average arrival rate is no more than the service rate for every physical queue}
For simplicity, we augment each computation node in the original graph $\GG$ by a self-loop that represents the computation queue. We denote the set of all links and self-loops by $\bar \EE$.

Since the virtual queues are strongly stable under policy $\pi^*$ (Theorem \ref{th:strong}), all the virtual queues are rate stable (Lemma 1 in \cite{umw_info17}).
\begin{eqnarray*}\label{eq:rateStable}
  \lim_{t \rightarrow \infty} \frac{\tilde{Q}_{e}(t)}{t} &=& 0, ~~\forall (u,v) \in \bar \EE, ~~\text{w.p.}~1.
\end{eqnarray*}

Almost surely for any sample path $\omega \in \Omega$ (\ie a realization of random arrivals),
\begin{equation}\label{eq:rate}
  A_e(\omega; t_0, t) \leq S_e(\omega; t_0, t) + F_e(\omega; t),~~ e \in \bar \EE,
\end{equation}
where $A_e(\omega; t_0, t) = \sum_{\tau = t_0}^{t - 1} A_e^{\pi^*}(\omega; \tau)$ is the total number of packets that arrive at virtual queue $\tilde Q_e$ during time $[t_0,t)$ under policy $\pi^*$ and sample path $\omega$; $S_e(\omega; t_0,t) = \sum_{\tau = t_0}^{t - 1} \mu_e = (t - t_0) \mu_e$ is the total number of packets that can be served by $e$; $F_e (\omega; t) = o(t)$ (\ie $\lim_{t\rightarrow \infty} F_e(\omega; t)/t = 0$).
Eq.~(\ref{eq:rate}) implies that the average arrival rate to the virtual queue $\tilde Q_e$ is no more than the service rate of $e$.

Next, we relate the arrival rate at a virtual queue to the arrival rate at a physical queue.
Since the routing policy for the physical system is identical to the routing policy $\pi^*$ for the virtual system, the exogenous packets that arrive at the network at time $t$ contribute a total of $A_e(\omega; t_0, t)$ packets to $e$ during the course of their service in the physical system. (Recall that a packet with a scaled size enters a virtual queue of a link immediately if the link is part of its route.) 

\subsubsection{ENTO stabilizes the physical queues}
We aim to prove that ENTO stabilizes the physical queues for any sample path $\omega$ that satisfies Eq.~(\ref{eq:rate}). In particular, we aim to prove
\begin{equation}\label{eq:rateStableReal}
  \lim_{t\rightarrow \infty} \frac{Q_e(\omega; t)}{t} = 0, ~~ \forall e \in \bar \EE.
\end{equation}
Then, ENTO stabilizes the physical queues almost surely because Eq.~(\ref{eq:rate}) holds for almost all sample paths.
$$ \lim_{t\rightarrow \infty} \frac{Q_e(t)}{t} = 0,~~\text{w.p.}~1,~\forall e \in \bar \EE.$$

For simplicity of presentation, we drop the $\omega$ in the notations and focus on one sample path. It has been shown in \cite{umw_info17} that there exists a \emph{non-decreasing non-negative} function $M(t) = o(t)$ such that
\begin{equation}\label{eq:rate2}
  A_e(t_0,t) \leq S_e(t_0,t) + M(t), ~~\forall e \in \bar \EE, t_0 \leq t.
\end{equation}

We introduce a few new notations. A \emph{hop-$k$} packet is a packet that has traveled $k$ hops from the origin. The processing at a computation node is also considered as one additional hop. A duplication of a packet inherits the hop of the original packet. The packets entering the network during $[t_0,t)$ contribute to $e$ a total of $A_e(t_0,t)$ packets. Among these packets, $A_e^k(t_0,t)$ packets use $e$ as their $(k+1)$-th hop, and they are hop-$k$ packets whiling waiting to cross $e$. Let $M^{\max} = \max_\phi M_\phi + 1$ denote the maximum number of functions in any service chain plus one. The maximum number of hops that a packet travels under the routing policy $\pi^*$ is $nM^{\max}$, where $n$ is the number of nodes in $\GG$. By definition,
$$A_e(t_0,t) = \sum_{k=0}^{nM^{\max}-1} A_e^k(t_0,t).$$
Let
$$\gamma = \max_{\phi \in \Phi, 0 \leq i \leq j \leq M_\phi} \frac{\max(w^{(\phi,j)}, x^{(\phi,j)})}{\min(w^{(\phi,i)}, x^{(\phi,i)})}$$
denote the maximum aggregated scaling factor. \Ie each packet that departs from any link contributes to at most $\gamma$ packets to any subsequent link in $\GG^{(\phi)}$. Note that the value $A_e(t_0,t)$ has taken the scalings into consideration, \ie Eqs.~(\ref{eq:scaling1}) and (\ref{eq:scaling2}). Let $Q_e(t)$ denote the physical queue length at $e$ at time $t$. Let $Q_e^k(t)$ denote the number of hop-$k$ packets in the queue at $e$ at time $t$. Let $Q^k(t) = \sum_{e \in \bar \EE}Q_e^{k}$ denote the total number of hop-$k$ packets in the network at time $t$. We prove by induction that $Q^k(t) = o(t)$ for all $k \in \{0,\dots,nM_{\max}-1\}$. 

\emph{Base step $k = 0$: }
Let $t_0 < t$ be the largest time at which no hop-0 packet were waiting to cross a specified link $e$. If no such time exists, $t_0 = 0$. During $[t_0,t)$, at most $A_e^0(t_0,t) \leq A_e(t_0,t) \leq S_e(t_0,t) + M(t)$ hop-0 packets arrived at $e$, by Eq.~(\ref{eq:rate2}). Moreover, $e$ is constantly transmitting hop-0 packets, for a total of $S_e(t_0,t)$ packets, because hop-0 packets have the highest priority and there are always hop-0 packets waiting to cross $e$ by the choice of $t_0$. Therefore,
$$ Q_e^0(t) \leq S_e(t_0,t) + M(t) - S_e(t_0,t) = M(t).$$
There are at most $\bar m = |\EE|+|\VV|$ physical queues. Therefore, $Q^0(t) \leq \bar m M(t)$. Let $B^0(t) = \bar m M(t) = o(t)$. Note that $B^0(t)$ is non-decreasing in $t$.

\emph{Induction step: }
Suppose that $Q^j(t) \leq B^j(t)$ for all $0 \leq j < k$, where $B^j(t) = o(t)$ is non-decreasing. We aim to prove that $Q^k(t) \leq B^k(t)$, for a non-decreasing $B^k(t) = o(t)$. Let $t_0$ be the largest time at which no hop-$k$ packets were waiting to cross a specified link $e$. Let $t_0 = 0$ if no such time exists.

The \emph{new} packets that arrive at the network during $[t_0,t)$ contributes at most $A_e^k(t_0,t)$ hop-$k$ packets to $e$ by time $t$. The \emph{old} packets that were already in the network by time $t_0$ contributes to $e$ at most $\gamma \sum_{0\leq j<k} B^j(t_0)$ hop-$k$ packets, because each of the $\sum_{0\leq j<k} B^j(t_0)$ old packets of hop fewer than $k$ contributes at most $\gamma$ hop-$k$ packets to $e$. Note that the old packets of hop more than $k$ never become hop-$k$ packets again. 

Next we bound the number of packets of hop fewer than $k$ that are transmitted through $e$ during $[t_0,t)$. The new packets that arrive at the network during $[t_0,t)$ contribute to $e$ at most $\sum_{0 \leq j<k} A_e^j(t_0,t)$ packets of hop fewer than $k$. Each old packet contributes at most $\gamma$ hop-$j$ packets ($0 \leq j< k$). Thus, the total number of packets of hop fewer than $k$ contributed by one old packet is at most $\gamma k$. For a total of $\sum_{0\leq j<k} B^j(t_0)$ old packets, at most $\gamma k \sum_{0\leq j<k} B^j(t_0)$ packets of hop fewer than $k$ travel through $e$ during $[t_0,t)$.

The link is consistently processing packets of hop no more than $k$ during $[t_0,t)$, by the choice of $t_0$. The packets that have hop fewer than $k$ have a higher priority than the hop-$k$ packets.
Thus, the number of hop-$k$ packets that are processed by $e$ is at least
$ \max(0, S_e(t_0,t) - \sum_{0 \leq j<k} A_e^j(t_0,t) - \gamma k \sum_{0\leq j<k} B^j(t_0)).$

The number of hop-$k$ packets at queue $e$ at time $t$ is at most
\begin{eqnarray*}
  Q_e^k(t) & \leq & A_e^k(t_0,t) + \gamma \sum_{0\leq j<k} B^j(t_0) \\
   &-& (S_e(t_0,t) - \sum_{0 \leq j<k} A_e^j(t_0,t) - \gamma k \sum_{0\leq j<k} B^j(t_0)) \\
   &\leq& \gamma(k+1) \sum_{0\leq j<k} B^j(t_0) + M(t).
\end{eqnarray*}

Let $B_e^k(t) = \gamma (k+1) \sum_{0\leq j<k} B^j(t) + M(t)$. Since $M(t)$ and $B^j(t)$ are non-decreasing in $t$ for $0 \leq j < k$, $B_e^k(t)$ is a non-decreasing function and $B_e^k(t) \geq \gamma(k+1) \sum_{0\leq j<k} B^j(t_0) + M(t) \geq Q_e^k(t)$. Since $B^j(t) = o(t)$ for $0 \leq j < k$ and $M(t) = o(t)$, we have $B_e^k(t) = o(t)$. Let $B^k(t) = \sum_{e \in \bar \EE} B_e^k(t) = \bar m B_e^k(t)$. It is easy to check that $B^k(t) = o(t)$ is a non-decreasing function.

We have proved that $Q^k(t) = o(t)$ for all $k$. Then, the sum of all queue lengths $\sum_{e \in \bar \EE} Q_e(t) = \sum_{k} Q^k(t) \leq \sum_{k}B^k(t) = o(t)$. Therefore, all the physical queues are stable, and Eq.~(\ref{eq:rateStableReal}) holds.

\end{document}